\title{On the impact of treewidth in the computational complexity of freezing dynamics} 
\titlerunning{On the impact of treewidth in the computational complexity of freezing dynamics}
\author{Eric Goles}{Facultad de Ingeniería y Ciencias, Universidad Adolfo Ibáñez, Santiago, Chile}{}{}{}
\author{Pedro Montealegre}{Facultad de Ingeniería y Ciencias, Universidad Adolfo Ibáñez, Santiago, Chile}{}{}{}
\author{Martín Ríos-Wilson}{Departamento de Ingeniería Matemática, FCFM, Universidad de Chile, Santiago, Chile. \and Aix Marseille Univ, Université de Toulon, CNRS, LIS, Marseille, France}{}{}{}
\author{Guillaume Theyssier}{Aix-Marseille Université, CNRS, I2M (UMR 7373), Marseille, France}{}{}{}
\authorrunning{E. Goles, P. Montealegre, M. Ríos-Wilson, G. Theyssier}
\keywords{Freezing automata networks, treewidth, fast parallel algorithm, model checking, prediction, nilpotency, asynchronous reachability, predecessors}
\newtheorem{problem}[theorem]{Problem}
\newcommand\N{\mathbb{N}}
\DeclareMathOperator{\tw}{tw}
\newcommand{\cO}{\mathcal{O}}
\newcommand{\Sol}{\textrm{Sol}}
\newcommand{\Times}{\textsc{Times}}
\newcommand{\States}{\textsc{States}}
\newcommand{\Level}{\textsc{Level}}
\begin{document}

\maketitle

\begin{abstract}
  An automata network is a network of entities, each holding a state from a finite set and evolving according to a local update rule which depends only on its neighbors in the network's graph. It is freezing if there is an order on states such that the state evolution of any node is non-decreasing in any orbit. They are commonly used to model epidemic propagation, diffusion phenomena like bootstrap percolation or cristal growth. 
  
  In this paper we establish how alphabet size, treewidth and maximum degree of the underlying graph are key parameters which influence the overall computational complexity of finite freezing automata networks. First, we define a general decision problem, called \textsf{Specification Checking Problem}, that captures many classical decision problems such as prediction, nilpotency, predecessor, asynchronous reachability.
   
   Then, we present a fast-parallel algorithm that solves the general model checking problem when the three parameters are bounded, hence showing that the problem is in NC. Moreover, we show that the problem is in XP on the parameters tree-width and maximum degree.

   Finally, we show that these problems are hard from two different perspectives. First, the general problem is W[2]-hard when taking either treewidth or alphabet as single parameter and fixing the others. Second, the classical problems are hard in their respective classes when restricted to families of graph with sufficiently large treewidth. Moreover, for prediction, predecessor and asynchronous reachability, we establish the hardness result with a fixed set-defiend update rule that is universally hard on any input graph of such families.

\keywords{Freezing automata networks \and Treewidth \and Fast parallel algorithm  \and Prediction \and Nilpotency \and Asynchronous reachability \and Predecessors.}
\end{abstract}
\section{Introduction}

An automata network is a network of $n$ entities, each holding a state from a finite set $Q$ and evolving according to a local update rule which depends only on its neighbors in the network's graph. More concisely, it can be seen as a dynamical system (deterministic or not) acting on the set $Q^n$. The model can be seen as a non-uniform generalization of (finite) cellular automata. Automata networks have been used as modelization tools in many areas \cite{Goles90} and they can also be considered as a distributed computational model with various specialized definitions like in \cite{WR79i,WR79ii}. 

An automata network is \emph{freezing} if there is an order on states such that the state evolution of any node is non-decreasing in any orbit. Several models that received a lot of attention in the literature are actually freezing automata networks, for instance: bootstrap percolation which has been studied on various graphs \cite{Amini_2014,Balogh_2012,Balogh_2005,Holroyd03}, epidemic \cite{Fuentes} or forest fire \cite{forestfire} propagation models \footnote{They are discrete counterparts of the family of spatial SIR/SEIR models \cite{KMcK27} (or other variants) which are sadly famous amid the actual COVID-19 pandemic.}, cristal growth models \cite{ulam,Gravner98} and more recently self-assembly tilings \cite{Winslow16}. On the other hand, their complexity as computational models has been studied from various point of view: as language recognizers where they correspond to bounded change or bounded communication models \cite{vollmar81,KutribM10a,Carton2018}, for their computational universality \cite{arxivOllinger19,GolOlThey15,BeckerMOT18}, as well as for various associated decision problems \cite{GolesMMO17,GolesMMO18,Goles_2013,goles2019complexity}.

A major topic of interest in automata networks theory is to determine how the network graph affects dynamical or computational properties \cite{Gad18b,Goles_2013}. In the freezing case, it was for instance established that one-dimensional freezing cellular automata, while being Turing universal via Minsky machine simulation, have striking computational limitations when compared to bi-dimensional ones: they are NL-predictable (instead of P-complete) \cite{arxivOllinger19,GriMoo96,vollmar81}, can only produce computable limit fixed points starting from computable initial configurations (instead of non-computable ones starting from finite configurations) \cite{arxivOllinger19}, and have a polynomial time decidable nilpotency problem (instead of uncomputable) \cite{arxivOllinger19}.

The present paper aims at understanding what are the key parameters which influence the overall computational complexity of finite freezing automata networks. A natural first parameter is the alphabet size, as  automata networks are usually considered as simple machines having a number of states that is independent of the size of the network.  For the same reasons, a second parameter that we consider is the maximum degree of the network, as a simple machine might not be able to handle the information incoming from a large number of neighbors. Finally the results mentioned earlier show a gap between bi-dimensional grids and one-dimensional grids (i.e. paths or rings). Since Courcelle's theorem on MSO properties \cite{Courcelle_1990}, graph parameters like treewidth \cite{Robertson_1986} are used to measure a sort of distance to a grid. Indeed, it is known that paths or rings have constant treewidth, and the treewidth of a graph is polynomially related to the size of its largest grid minor \cite{Chekuri_2016}. Therefore treewidth is a natural parameter for our study. 



\emph{A canonical model checking problem to capture many classical dynamical problems.} We define a general model checking problem \textbf{SPEC}, that asks whether a given freezing automata networks has an orbit that satisfies a given set of local constraints on the trace at each node (Problem~\ref{prob:spectchecking}). It takes advantage of the sparse orbits of freezing automata networks (a bounded number of changes per node in any orbit) which allow to express properties in the temporal dimension in an efficient way. We show thanks to a kind pumping lemma on orbits (Lemma~\ref{lem:trace-length}) that it captures many standard problems in automata network theory, among which we consider four ones: prediction \cite{GolesMMO17,GolOlThey15,GriMoo96}, nilpotency \cite{Richard_2019,Gadouleau_2016,kari92}, predecessor \cite{Kawachi2019,greenlaw1995limits} and asynchronous reachability \cite{Dennunzio2017}. Note that since Boolean circuits are easily embedded into freezing automata networks, our framework also includes classical problems on circuit: circuit value problem is a sub-problem of our prediction problem (see Theorem~\ref{theo:pcompleteprediction}) and SAT is a sub-problem of our nilpotency problem (see Remark~\ref{rem:nilpotencyontrees} and Theorem~\ref{thm:nilpotency}).

\emph{Fast parallel algorithm.} We then present a NC algorithm that solves our general model checking problem \textbf{SPEC} on any freezing automata network with bounded number of states and graph with bounded degree and bounded treewidth (Theorem~\ref{lem:algospeci}). It solves in particular the four canonical problems above in NC for such graphs, as well as circuit value problem and SAT (see \cite{Szeider_2004,Buss_1987} for better known results for these specific problems). Note that our algorithm is uniform in the sense that, besides the graph, both the automata network rule and the constraint to test are part of the input and not hidden in an expensive pre-processing step. As suggested above, temporal traces of the evolution of a bounded set of nodes have a space efficient representation. However, it is generally hard to distinguish real orbits projected on a set of nodes from locally valid sequences of states that respect the transition rule for these nodes. Our algorithm exploits bounded treewidth and bounded degree to solve this problem via dynamic programming for any finite set of nodes. In the deterministic case, our algorithm can completely reconstruct the orbit from the initial configuration. 

\emph{Hardness results.} In light of Courcelle's theorem, one might think that our algorithm solving \textbf{SPEC} could be directly obtained (or even improved) by simply expressing the problem in MSOL \cite{Courcelle_1990}. We show that previous statement is impossible, unless $W[2] = FPT$. More precisely, we show that the version of our model checking problem where treewidth is the unique parameter and alphabet and degree are fixed is $W[2]$-hard (Corollary~\ref{cor:w2fixedalphabet}), and thus is not believed to be fixed parameter tractable. We obtain a similar result on a slight variation of our problem when alphabet is the unique parameter and treewidth and degree are fixed (Corollary~\ref{cor:w2fixedtreewidth}). 

Finally, we prove that the four problems mentioned before, (namely prediction, nilpotency, predecessor and asynchronous reachability) are complete in there respective class (respectively P-Complete and coNP- Complete for the first two, and NP-Complete for the later) when we restrict the input graphs to a constructible family of sufficiently large treewidth (Theorems~\ref{thm:nilpotency}, \ref{teo:predeasyncNP} and \ref{theo:pcompleteprediction}). To do so, we rely on an efficient algorithm to embed arbitrary (but polynomially smaller) digraph into our input graph (Lemma~\ref{lem:routing}), which relies on polynomial perfect brambles that can be efficiently found in graphs with polynomially large treewidth \cite{KreutzerT10} (here by polynomial we mean ${\Omega(n^\alpha)}$ for some positive real number $\alpha$). This embedding allows to simulate a precise dynamics on the desired digraph inside the input graph and essentially lifts us from the graph family constraint as soon as the treewidth is large enough. Moreover, for problems prediction (Theorem~\ref{theo:pcompleteprediction}), predecessor and asynchronous reachability (Theorem~\ref{teo:predeasyncNP}) we achieve the hardness result with a fixed uniform set-defined rule (\textit{i.e.} a rule that change the state of each node depending only on the set of states seen in the neighborhood) which is not part of the input. This shows that there is a uniform isotropic universally hard rule for these problems, which makes sense for applications like bootstrap percolation, epidemic propagation or cristal growth where models are generally isotropic and spatially uniform.  

\section{Preliminaries}
Given a graph $G=(V,E)$ and a vertex $v$ we will call $N(v)$ to the neighborhood of $v$ and $\delta_v$ to the degree of $v$.  In addition, we define the closed neighborhood of $v$ as the set $N[v] = N(v) \cup \{v\}$ and we use the following notation $\Delta(G) = \max \limits_{v \in V} \delta_v$  for the maximum degree of $G$. We will use the letter $n$ to denote the order of $G$, i.e. $n = |V|$.  Also, if $G$ is a graph and the set of vertices and edges is not specified we use the notation $V(G)$ and $E(G)$ for the set of vertices and the set of edges of $G$ respectively. In addition, we will assume that if $G =(V,E)$ is a graph then, there exist an ordering of the vertices in $V$ from $1$ to $n$.  During the rest of the text, every graph $G$ will be assumed to be connected and undirected.  We define a \textit{class} or a \textit{family} of graphs as a set $\mathcal{G}=\{G_n\}_{n \geq 1}$ such that $G_n = (V_n,E_n)$ is a graph and $|V_n| = n$. 

\emph{Non-deterministic freezing automata networks.}
Let $Q$ be a finite set that we will call an \textit{alphabet}. We define a non-deterministic automata network in the alphabet $Q$ as a tuple $(G=(V,E),\mathcal{F}=\{F_v : Q^{N(v)} \to \mathcal{P}(Q) | v \in V\}))$ where $\mathcal{P}(Q)$ is the power set of $Q$. To every non-deterministic automata network we can associate a non-deterministic dynamics given by the global function $F: Q^n \to \mathcal{P}(Q^n) $ defined by $F(x) = \{ x \in Q^n | x_v \in F_v(x),\forall v\}.$
\begin{definition}
	Given a a non-deterministic automata network $(G,\mathcal{F})$ we define an orbit of a configuration $x \in Q^n$ at time $t$ as a sequence $(x_s)_{0\leq s\leq t}$ such that $x_0 = x$ and $x_{s} \in F(x_{s-1}).$ In addition, we call the set of all possible orbits  at time $t$ for a configuration $x$ as $\mathcal{O} (x,t)$. Finally, we also define the set of all possible orbits at time $t$ as $\mathcal{O}(\mathcal{A},t
	) = \bigcup \limits_{x \in Q^n} \mathcal{O}(x,t)$
\end{definition}
We say that a non-deterministic automata network  $(G,\mathcal{F})$ defined in the alphabet $Q$ satisfies the \textit{freezing property} or simply that it is \textit{freezing} if there exists a partial order $\leq$ in $Q$ such that for every $t  \in \N$ and for every orbit $y = (x_s)_{0\leq s \leq t} \in \mathcal{O}(\mathcal{A},t)$ we have that $x^i_s \leq x^i_{s+1}$  for every $0 \leq s \leq t$ and for every $0\leq i \leq n.$
Let $y = (x_s)_{0\leq s \leq t}$  be an orbit for  a non-deterministic automata network $(G,\mathcal{F})$ and $S \subseteq V$ we define the restriction of $y$ to $S$ as the sequence $z \in (Q^t)^{|S|}$ such that $x^v_s = z^v_s$ for every $ v \in V$ and we note it $y|_S.$ In the case in which $S = \{v\}$ we simply write $y_v$ in order to denote the restriction of $y$ to the singleton $\{v\}$
\begin{definition}
	Given a a non-deterministic automata network $(G,\mathcal{F})$ and a set $S \subseteq V$,  we define the set of $S$-restricted orbits as the set $\mathcal{T}(S,t) = \{z = (x_s)_{s \leq t} \in Q^{|S|} \text{ } | \text{  } \exists y \in \mathcal{O}(t): y|_{S} = z\}$. When ${S=\{v\}}$ we simply write ${\mathcal{T}(v,t)}$ for ${\mathcal{T}(\{v\},t)}$.
\end{definition}
During the rest of the text and we use the notation $z = x|_{S}$. Finally, if $\mathcal{A} = (G,\mathcal{F})$ is a non-deterministic freezing automata network such that for every $v \in V(G),$  $F_v \in \mathcal{F}$ is such that $|F_v(x)| = 1,$ for all $x \in  Q^{N(v)}$ then, we say that $\mathcal{A}$ is deterministic and view local rules as maps ${F_v:Q^{N(v)}\rightarrow Q}$ to simplify notations.

\emph{Tree decompositions and treewidth.}
Let $G = (V,E)$ be a connected graph. A subgraph $P$ of $G$ is said to be a path if $V(P) = \{v_1,\hdots,v_k\}$ where every $v_i$ is different and $E(P) = \{v_1v_2 , v_2v_3\hdots, v_{k-1}v_k \}$.  We define the length of a path $P$ in $G$ as the number of edges of $P$. Given two vertices $u,v \in V(G)$ we say that $P$ is a v-u path if $v_1 = v$ and $v_k = u$ We say that $P$ is a cycle if $k \geq 3$ and $v_k = v_1$. We say that $G$ is a tree-graph or simply a tree if it does not have cycles as subgraphs. Usually, we will distinguish certain node in $r \in V(G)$ that we will call the root of $G$. Whenever $G$ is a tree and there is a fixed vertex $r \in V(G)$ we will call $G$ a rooted tree-graph. In addition, we will say that $v \in V(G)$ is a leaf if $\delta_v= 1$. Straightforwardly the choice of $r$ induces a partial order in the vertices of $G$ given by the distance (length of the unique path) between a node $v \in V(G)$ and the root $r$. We define the height of $G$ (and we write it as  $h(G)$) as the longest path between a leaf and $r$.  We say that a node $v$ is in the $(h(G)-k)$-th level of a tree-graph $G$ if the distance between $v$ and $r$ is $k$ and we write $v \in \mathcal{L}_{h(G)-k}$. We will call the \textit{children} of a node $v \in \mathcal{L}_k$ to all $w \in N(v)$ such that $w$ is in level $k-1$.

\begin{definition}
	Given a graph $G= (V,E)$ a tree decomposition is pair $ \mathcal{D} = (T,\Lambda)$ such that $T$ is a tree graph and $\Lambda$ is a family of subsets of nodes  $\Lambda = \{X_t \subseteq V | \text{ } t \in V(T) \}$, called bags, such that:
	\begin{itemize}
		\item Every node in $G$ is in some $X_t$, i.e: $\bigcup \limits_{t \in V(T)} X_t = V$
		\item For every $e=uv \in E$ there exists $t \in V(T)$ such that $u,v \in X_t$
		\item For every $u,v \in V(T)$ if $w \in V(T)$ is in the $v$-$y$ path in $T$, then $X_u \cap X_v \subseteq X_w$
	\end{itemize}
\end{definition}
We define the width of a tree decompostion $\mathcal{D}$ as the amount $ \text{width} (\mathcal{D}) = \max  \limits_{t \in V(T)}|X_t|-1$.  Given a graph $G= (V,E)$, we define its treewidth as the parameter  $\text{tr}(G) = \min \limits_{\mathcal{D}} \text{width}(\mathcal{D})$. In other words, the treewidth is the minimum width of a tree decomposition of $G$. Note that, if $G$ is a connected graph such that $|E(G)| \geq 2$ then, $G$ is a tree  if and only if $\text{tw}(G) = 1$. 

It is well known that, given an arbitrary graph $G$, and $k \in \mathbb{N}$, the problem of deciding if $\text{tw}(G) \leq k$ is $\textbf{NP}$-complete \cite{Arnborg1987}. Nevertheless, if $k$ is fixed, that is to say, it is not part of the input of the problem then, there exist efficient algorithms that allow us to compute a tree-decomposition of $G$. More precisely, it is shown that for every constant $k \in \N$ and a graph $G$ such that $\text{tw}(G) \leq k$, there exist a log-space algorithm that computes a tree-decomposition of $G$ \cite{Elberfeld2010}. In addition,  in Lemma 2.2 of \cite{Bodlaender1998} it is shown that given any tree decomposition of a graph $G$,  there exist a fast parallel algorithm that computes  a slightly bigger width binary tree decomposition of $G$. More precisely, given a tree decomposition of width $k$, the latter algorithm computes a binary tree decomposition of width at most $3k +2.$  We outline these results in the following proposition:
\begin{proposition}
	\label{prop:treedecompose}
	Let $n\geq 2, k \geq 1$ and let $G=(V,E)$ with $|V| = n$ be a graph such that $\text{tw}(G)\leq k$. There exists a  CREW PRAM algorithm  using $\mathcal{O}(\log^2 n)$ time, $n^{\mathcal{O}(1)}$ processors  and $\mathcal{O}(n)$ space  that computes a binary treewidth decomposition of width  at most $3k+2$ for $G$.
\end{proposition}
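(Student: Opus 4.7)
The plan is to obtain the decomposition in two phases, directly composing the two existing algorithmic results cited just above the statement. In the first phase, I would run the log-space tree-decomposition algorithm of \cite{Elberfeld2010} on the input graph $G$, producing some tree decomposition $\mathcal{D}_0 = (T_0,\Lambda_0)$ of $G$ of width at most $k$. Since the standard simulation $\mathsf{L}\subseteq \mathsf{NC}^2$ turns any log-space Turing machine into a CREW PRAM running in $\mathcal{O}(\log^2 n)$ time with $n^{\mathcal{O}(1)}$ processors, this first phase already fits in the target time/processor bounds. Moreover, because $k$ is fixed, each bag of $\mathcal{D}_0$ has at most $k+1=\mathcal{O}(1)$ vertices, and one may assume without loss of generality that $T_0$ has $\mathcal{O}(n)$ nodes after a parallel cleanup pass that removes duplicated or subsumed bags, so the working space stays linear.

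In the second phase, I would feed $\mathcal{D}_0$ into the parallel transformation of Lemma~2.2 in \cite{Bodlaender1998}, which turns any tree decomposition of width $k$ into a binary tree decomposition of width at most $3k+2$ in $\mathcal{O}(\log n)$ parallel time on a CREW PRAM with polynomially many processors and linear space. Composing the two phases sequentially, the overall running time is $\mathcal{O}(\log^2 n)$ (dominated by the first phase), the processor count stays $n^{\mathcal{O}(1)}$, and the final binary decomposition has size $\mathcal{O}(n)$, so total space remains $\mathcal{O}(n)$; the width of the output is at most $3k+2$, as required.

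I expect the only real obstacle to be a bookkeeping issue rather than a mathematical one. Concretely, one must make sure that the output format of the log-space algorithm (typically adjacency lists of $T_0$ together with the bag contents $\Lambda_0$) is compatible with the input format expected by Bodlaender's parallel algorithm, and that the $\mathsf{L}\subseteq \mathsf{NC}^2$ simulation writes its output into shared memory while maintaining an $\mathcal{O}(n)$ total space footprint instead of the naive polynomial one. This is standard: since the target output already has size $\mathcal{O}(n)$, each processor only has to compute and store a constant-size portion of $\mathcal{D}_0$, so linear space suffices. Beyond these interface details, no additional combinatorial argument is needed; the proof is essentially a clean composition of \cite{Elberfeld2010} with \cite{Bodlaender1998}.
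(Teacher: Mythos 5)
Your proposal is correct and matches the paper's own argument: the proposition is presented there precisely as the composition of the log-space tree-decomposition algorithm of \cite{Elberfeld2010} (hence in $\mathsf{NC}^2$) with the parallel width-$(3k+2)$ binary-tree-decomposition transformation of Lemma~2.2 in \cite{Bodlaender1998}. Your additional remarks on output format and the $\mathcal{O}(n)$ space footprint are reasonable bookkeeping that the paper leaves implicit.
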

We now present basic concepts in parameterized complexity that we will be using during this paper (see~\cite{Downey_2013} for more details and context). A parameterized language is defined by $L \subset \{0,1\}^* \times \mathbb{N}.$ Whenever we take an instance $(x,k)$ of a parameterized problem we will call $k$ a \textit{parameter}. The objective behind parameterized complexity is to identify which are the key parameters in an intractable problem  that make it hard.

We say that a parameterized language $L$ is \emph{slice-wise polynomial} if $(x,k) \in L$ is decidable in polynomial time for every fixed $k\in \mathbb{N}$. More precisely, when $(x,k)\in L$ can be decided in time $|x|^{f(k)}$ for some arbitrary function $f$ depending only on $k$. The class of slice-wise polynomial parameterized languages is called XP. 

An important subclass of XP is the set of parameterized languages $L$ that are \textit{fixed-parameter tractable}, denoted FPT.  A parameterized language $L$ is in FPT if there exist an algorithm deciding if $(x,k) \in L$ in time $f(k)|x|^{\cO(1)}$ where $f$ an arbitrary function depending only in $k$. It is known that XP is not equal to FPT, however showing that some problem in XP is not in FPT seems currently out of reach for many natural examples.
As in other domains of complexity a hardness theory has been developed relying on the following notion of reduction. Given two parameterized languages $L_1, L_2$ we say that $L_1$ is FPT reducible to $L_2$ (and we write this as $L_1 \leq_{\textrm{FPT}} L_2)$  if there exist some functions $r,s: \N \to \N$ and $M: \{0,1\}^* \times \mathbb{N} \to \{0,1\}^*$ such that for each instance $(x,k)$ of $L_1$, $M$ is computable in time $s(k)|x|^c$ for some constant $c$ and $(x,k) \in L_1$ if and only if $(M(x), r(k)) \in L_2$.  A hierarchy of parameterized languages has been defined, called $W$-hierarchy, that contain a countable sequence of classes of parameterized languages, namely $\textrm{W}[1]$, $\textrm{W}[2], \textrm{W}[3] \dots $ such that $\textrm{FPT} \subset \textrm{W}[1] \subset, W[2] \subset \hdots \textrm{XP}$ and it is conjectured that are inclusions are proper. We don't give the formal definition these classes and we refer to \cite{Downey_2013} for more details. For our purposes, it is enough to know that $k$-\textsf{Dominating-Set} (i.e. finding a set of $k$ nodes that intersects the neighborhood of any node) is $W[2]$-hard and that a parameterized language is $W[2]$-hard if there is an FPT-reduction from $k$-\textsf{Dominating-Set} to it.

\paragraph*{Fast-parallel sub-routines}

Finally, we cite the following results that we use as some kind of toolbox for the proof of our main results:

\begin{proposition}[Prefix-sum algorithm, \cite{jaja}]
	\label{prop:prefixsum}
	Then the following problem can be solved by a CREW PRAM machine with $p = \mathcal{O}(n)$ processors in time $\mathcal{O}(\log n)$: 	Given $A = \{x_1,\hdots, x_n\}$ be a finite set, $k\leq n $ and $\oplus$ be a binary associative operation in $A,$  compute $\oplus_{i = 1}^{k} x_i $
\end{proposition}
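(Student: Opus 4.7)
The plan is to use the classical balanced binary tree reduction technique for associative operations. First, I would conceptually arrange the inputs $x_1,\dots,x_n$ at the leaves of a complete binary tree of depth $\lceil \log_2 n \rceil$, and assign one processor to each of the at most $2n$ nodes of this tree, which fits in the $\cO(n)$-processor budget.

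Then, for $\lceil \log_2 n \rceil$ synchronous rounds, I would have each processor at level $\ell \geq 1$ read the two values stored at its children (level $\ell-1$) from shared memory and write their $\oplus$-combination into its own cell. Since $\oplus$ is associative, the value computed at the root is $x_1 \oplus \cdots \oplus x_n$. Each memory cell is written by exactly one processor and is read, in a given round, only by that processor's parent, so exclusive writes suffice and concurrent reads never actually arise: this fits the CREW model. The total number of rounds is $\cO(\log n)$.

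To accommodate the restriction $k \leq n$, I would use the standard segment-tree argument: any prefix $[1,k]$ can be decomposed into at most $2\lceil \log_2 n\rceil$ maximal sub-intervals, each of which coincides with some subtree of the binary tree. After the bottom-up sweep above has been performed (storing at every internal node the $\oplus$-combination of its whole subtree), a second $\cO(\log n)$-round phase walks down the path from the root to leaf $k$, combining the pre-computed subtree values that lie to the left of that path. A single processor is enough for this stitching and it takes $\cO(\log n)$ additional time.

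The main obstacle is essentially bookkeeping rather than conceptual: one must be careful when $n$ is not a power of two (adjoining a fresh neutral element to $A$ if $\oplus$ does not already have one, which does not affect the complexity), and one must guarantee that no two processors ever write to the same cell in the same step. Both are handled routinely by addressing nodes with their position in the tree and scheduling them strictly by level. No concurrent-write primitive is used, which is why the statement holds on a CREW PRAM rather than only on a CRCW one.
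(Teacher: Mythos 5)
The paper states this proposition as a cited black-box result from \cite{jaja} and gives no proof of its own; your argument is precisely the standard balanced-binary-tree construction (bottom-up combination of subtree values followed by a root-to-leaf descent accumulating the left-hanging subtrees in left-to-right order, which is the right order since $\oplus$ is only assumed associative) that underlies the cited textbook proof, and it is correct, including the CREW/exclusive-write bookkeeping and the padding with a neutral element when $n$ is not a power of two.
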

\begin{proposition}[{\cite[Theorem 5.3]{Reingold2008}}]
	Let $n \in \mathbb{N}$. The following problem can be solved in space $\mathcal{O}(\log n)$: given an undirected graph $G=(V,E)$ with $|V| = n$, $s, t \in V$ find a path from $s$ to $t$ and if there exists such a path, return the path as an output.
	\label{prop:pathsLOG}
\end{proposition}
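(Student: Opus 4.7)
My plan is to reduce the path-output problem to the decision version of undirected $s$-$t$ connectivity, which is the content of Reingold's celebrated theorem, and then extract a concrete path via a log-space self-reduction through a canonical spanning tree.

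First, I would invoke Reingold's main result as a black box: there is a deterministic log-space Turing machine that, given an undirected graph $H$ and two vertices $a,b$, decides whether $a$ and $b$ lie in the same connected component of $H$. The proof of this decision result is a deep argument based on derandomizing random walks via zig-zag products of expander graphs, and I would not attempt to reprove it here.

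Next, I would construct a canonical spanning tree $T$ of $G$ implicitly. Fix a lexicographic order on the edges of $G$. Applying Kruskal's rule, an edge $e=(a,b)$ belongs to $T$ if and only if $a$ and $b$ are not already connected in the subgraph $G_{<e}$ formed by the edges strictly smaller than $e$. This criterion is a single log-space connectivity query on a subgraph of $G$ whose edge predicate is a trivial index comparison, so testing membership ``$e\in T$'' costs only $\cO(\log n)$ space per edge and there is no recursion, because $T_{<e}$ and $G_{<e}$ induce the same connected components on $V$. Given this implicit $T$, I would walk along the unique $s$-to-$t$ path in $T$ by maintaining a current vertex $v$ (initially $s$) together with the previous one, and outputting at each step the unique neighbor $u$ of $v$ in $T$ that is still connected to $t$ in $T\setminus\{v\}$; this is again a log-space connectivity query whose input subgraph has an edge predicate that combines the log-space test ``$e\in T$'' with a single vertex deletion. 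If the initial connectivity query returns ``no'', the procedure simply reports that no path exists.

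The principal obstacle will be ensuring that the composition of these many log-space subroutines still fits in $\cO(\log n)$ total working space rather than merely polynomial space. This would be handled by the classical log-space composition lemma: although polynomially many connectivity queries are issued and each of them consults a polynomial-size oracle of edge predicates, no tape of intermediate results is ever materialized in memory; each queried bit is recomputed on demand from the current $\cO(\log n)$-bit configuration. The write-only output tape on which the path is emitted symbol by symbol does not count against the working-space budget by standard convention, which is what makes the final output of the whole path compatible with the $\cO(\log n)$ space bound.
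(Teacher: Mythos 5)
Your argument is correct, but it necessarily takes a different route from the paper, because the paper offers no proof at all for this proposition: it is stated purely as a citation of Reingold's Theorem~5.3, where the search version is obtained from the internals of Reingold's construction (the zig-zag/powering transformation yields an expander of logarithmic diameter in which short paths can be enumerated and translated back to paths in the original graph). What you do instead is a generic search-to-decision self-reduction that uses only the decision theorem as a black box, and the two nontrivial points are both handled correctly: (i) the Kruskal-style membership test ``$e\in T$ iff the endpoints of $e$ are disconnected in $G_{<e}$'' avoids any recursion precisely because of the invariant you state, namely that $T_{<e}$ and $G_{<e}$ have the same connected components, so each membership query is a single connectivity query on a subgraph of $G$ with a trivially computable edge predicate; and (ii) the walk step is well defined because deleting the current vertex $v$ from the tree $T$ leaves $t$ in exactly one component, which contains exactly one neighbor of $v$, and the whole procedure is a constant-depth composition of log-space machines, so the standard recompute-on-demand composition keeps the working space at $\mathcal{O}(\log n)$ while the path is emitted on the write-only output tape. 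The trade-off is the usual one: your reduction is modular and reusable for any graph class, at the cost of polynomially many invocations of Reingold's algorithm, whereas the cited proof extracts the path directly from the structure built by the connectivity algorithm itself. For the purposes of this paper either justification suffices, since the proposition is only used as a subroutine.
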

\begin{proposition}[{\cite[Theorem 3]{goldberg1987parallel}}]
	Let $\Delta \in \mathbb{N}$.  The following problem can be solved in  time $\mathcal{O}(\Delta \log ( \Delta +\log^*n) )$ by an EREW PRAM: given a graph $G = (V,E)$  such that $\Delta(G) \leq \Delta$ finding a $\Delta +1$ coloring of $G$.
	\label{prop:colNC}
\end{proposition}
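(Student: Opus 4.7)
Since this proposition is quoted verbatim from Theorem~3 of \cite{goldberg1987parallel}, the ``proof'' in the paper will presumably be a one-line citation. If I had to reconstruct the argument from scratch, the plan is to use the now-classical Cole--Vishkin style iterated log-star color reduction, adapted to bounded-degree graphs. First, I would start from the trivial $n$-coloring that labels each vertex with its own identifier on $b_0=\lceil\log n\rceil$ bits, after orienting the edges along vertex IDs so every vertex has at most $\Delta$ oriented in-neighbours to consult (this replaces the parent pointer used in the tree version of Cole--Vishkin).

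The main loop is the following parallel color-reduction step: each vertex $v$ reads the $b$-bit color of every oriented neighbour $u$ and produces, as its new label, the tuple listing, for each $u$, the index of the first bit in which $u$'s color differs from $v$'s color, together with the value of that bit at $v$. The new labels are still a proper coloring because any two adjacent vertices will disagree on at least one of the entries of the tuple, and the number of bits drops from $b$ to $\cO(\Delta\log b)$. Iterating this step $\cO(\log^* n)$ times drives $b$ down to $\cO(\Delta)$, yielding a valid coloring with at most $2^{\cO(\Delta)}$ colors.

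The cleanup stage reduces this to $\Delta+1$ colors: for each color $c=\Delta+2,\ldots,2^{\cO(\Delta)}$ taken in order, every vertex currently of color $c$ checks in parallel which colors in $\{1,\ldots,\Delta+1\}$ are free in its neighbourhood and recolors itself with the smallest free one. Such a free color exists because $v$ has at most $\Delta$ neighbours, and no conflict arises among simultaneously recolored vertices since two vertices of the same color $c$ are non-adjacent. This takes $\cO(\Delta)$ outer rounds of $\cO(\Delta)$ EREW time each, contributing the $\cO(\Delta^2)=\cO(\Delta\log 2^{\Delta})$ term that merges into the stated bound.

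The main obstacle is respecting the EREW constraint, i.e.\ forbidding concurrent reads of the same memory cell while still meeting the $\cO(\Delta\log(\Delta+\log^*n))$ time bound. I would handle this by scheduling the read phase of each reduction round in $\Delta+1$ sub-steps determined by a pre-computed edge coloring of the oriented graph (each sub-step accesses each memory cell at most once), and by computing the first differing bit through a balanced binary tree over the $b$ bits, which costs $\cO(\log b)$ exclusive-read time. This scheduling, combined with the fact that $b$ stabilises at $\cO(\Delta)$ after the log-star phase, is what makes the overall running time match the statement, and is precisely the technical heart of the Goldberg--Plotkin--Shannon construction.
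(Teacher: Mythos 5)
You are right that the paper offers no proof of this proposition at all: it is stated purely as an imported toolbox result, with the one-line citation to Theorem~3 of Goldberg--Plotkin--Shannon, so there is no internal argument to compare against. Your reconstruction of the log-star phase is essentially sound: orienting edges by identifiers, replacing the single Cole--Vishkin parent pointer by a tuple with one (first-differing-bit index, own bit) entry per oriented in-neighbour does yield a proper colouring (if the labels of two adjacent vertices agreed slot by slot, the slot that $v$ devotes to $u$ would force $u_i=v_i$ at the very index $i$ where they were chosen to differ), and the bit length indeed contracts as $b\mapsto \cO(\Delta\log b)$, stabilising after $\cO(\log^* n)$ rounds. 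Note, however, that the fixed point of that contraction is $\Theta(\Delta\log\Delta)$ bits, so the intermediate palette has $\Delta^{\Theta(\Delta)}$ colours, not $2^{\cO(\Delta)}$.

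The genuine gap is in your cleanup stage, and it is visible in your own indexing: you sweep over colours $c=\Delta+2,\ldots,2^{\cO(\Delta)}$ one class at a time, which is $2^{\cO(\Delta)}$ (in fact $\Delta^{\Theta(\Delta)}$) sequential outer rounds, not the $\cO(\Delta)$ rounds you then claim; this is exponentially worse in $\Delta$ than the stated bound $\cO(\Delta\log(\Delta+\log^* n))$. The Goldberg--Plotkin--Shannon argument avoids ever holding an exponentially large palette: it partitions the oriented edges into $\Delta$ pseudo-forests (by out-edge rank), $3$-colours each one in $\cO(\log^* n)$ rounds using the single-parent Cole--Vishkin reduction, and then merges the forests back one at a time, so that at every merge the current colouring has only $\cO(\Delta)$ colours and the class-by-class recolouring step stays cheap. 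For the way the present paper actually uses Proposition~\ref{prop:colNC} --- with $\Delta$ a fixed parameter, inside Theorem~\ref{theo:pcompleteprediction}, where only membership in $\mathbf{NC}$ matters --- your version would suffice, since everything depending only on $\Delta$ is a constant; but as a proof of the proposition with its explicit $\Delta$-dependence, the cleanup has to be replaced by the forest-decomposition argument.
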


\section{Localized Trace Properties}

In this section we formalize the general decision problem we consider on our dynamical systems. Freezing automata network have temporally sparse orbits, however the set of possible configurations is still exponential. Our formalism takes this into account by considering properties that are spatially localized but without restriction in their temporal expressive power. More precisely, we introduce the concept of a specification, in an attempt of generalizing the notion of parallelizable partial information regarding possible orbits of a freezing automata network.
\begin{definition}
	Let $t$ be a natural number and $\mathcal{A}=(G=(V,E),\mathcal{F})$ a non-deterministic freezing automata network in some partially-ordered alphabet $Q$. 
	A $(Q,t, \mathcal{A})$-specification (or simply a $t$-specification when the context is clear) is a function $\mathcal{E}_t: V \to \mathcal{P}(Q^t)$ such that, for every $v\in V$, the sequences in $\mathcal{E}_t(v)$ are non-decreasing. 
\end{definition}

\emph{Pumping lemma on orbits.} The following lemma shows that for all freezing automata networks the set of orbits of any length restricted to a set of nodes is determined by the set of orbits of fixed (polynomial) length restricted to these nodes. Moreover, if the set of considered nodes is finite, then the fixed length can be chosen linear.

\begin{lemma}
  \label{lem:trace-length}
  Let $Q$ be an alphabet, $V$ a set of nodes with ${|V|=n}$ and ${U\subseteq V}$. Let $L=|U||Q|(|Q|n+1)$. Then if two non-deterministic freezing automata have the same set of orbits restricted to $U$ of length $L$ then they have the same set of orbits restricted to $U$ of any length.
\end{lemma}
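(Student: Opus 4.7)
The plan is a pumping argument exploiting the structural rigidity of freezing orbits. First, the key structural bound: any orbit has at most $n(|Q|-1)+1 \leq n|Q|$ distinct full configurations (since each of the $n$ nodes can change at most $|Q|-1$ times), and the $U$-restriction takes at most $|U|(|Q|-1)+1 \leq |U||Q|$ distinct values, hence decomposes into at most $|U||Q|$ maximal constant blocks. The crucial pumping ingredient follows: inside any constant-$U$ block whose length exceeds $n|Q|$, pigeonhole on the $\leq n|Q|$ distinct full configurations of the orbit forces two consecutive configurations to coincide, producing a self-loop $x_t \in F(x_t)$ inside the block. Such a self-loop can be freely duplicated or deleted without breaking the validity of the orbit or altering the $U$-restriction at any other time, and this fact applies uniformly to every freezing automata network, in particular to both $\mathcal{A}_1$ and $\mathcal{A}_2$.

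Given $z \in \mathcal{T}_1(U,T)$ with $T \geq L$, I would pick any $\mathcal{A}_1$-orbit realising $z$ and compress it block-by-block: every constant-$U$ block of length exceeding $n|Q|+1$ gets shrunk down to length exactly $n|Q|+1$ by repeatedly deleting self-loops (which exist by the previous observation). The resulting orbit is still a valid $\mathcal{A}_1$-orbit and has length at most $|U||Q|(n|Q|+1)=L$. If the compressed length is strictly smaller than $L$, I would re-inflate the orbit to length exactly $L$ by inserting additional self-loops inside any shrunk block (such a block still has length $n|Q|+1$ and hence an available self-loop; at least one shrunk block exists because $T \geq L$ forces some original block to be long). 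Call $z^\ast$ the restriction of this length-$L$ orbit: by construction $z^\ast \in \mathcal{T}_1(U,L)$, it coincides with $z$ on every short block (original length $\leq n|Q|+1$), and differs from $z$ only on blocks whose length in both $z$ and $z^\ast$ is at least $n|Q|+1$.

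By the hypothesis, $z^\ast \in \mathcal{T}_2(U,L)$, so $\mathcal{A}_2$ contains some orbit $y'$ realising $z^\ast$. Every block in which $z^\ast$ and $z$ disagree in length has length at least $n|Q|+1$ in $y'$, so the pumping ingredient applied inside $\mathcal{A}_2$ provides a self-loop in each such block. Using those self-loops I insert or delete configurations block-by-block inside $y'$ to match the block lengths prescribed by $z$ exactly, producing an $\mathcal{A}_2$-orbit of length $T$ restricting to $z$. Hence $z \in \mathcal{T}_2(U,T)$, and symmetric reasoning yields $\mathcal{T}_1(U,T)=\mathcal{T}_2(U,T)$. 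The case $T<L$ is handled by the same scheme in reverse: extend the given length-$T$ orbit of $\mathcal{A}_1$ to length $L$ by inserting self-loops where available, invoke the hypothesis, and take the prefix of length $T$ of the corresponding $\mathcal{A}_2$-orbit.

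The main technical obstacle I foresee is the bookkeeping in the compression/re-inflation step: the compressed-and-reinflated orbit must simultaneously have length exactly $L$ and keep every block whose length will differ between $z$ and $z^\ast$ of length at least $n|Q|+1$, so that the corresponding block in any $\mathcal{A}_2$-orbit realising $z^\ast$ is forced to host a self-loop. The constant $L = |U||Q|(n|Q|+1)$ in the statement is calibrated exactly for this: it is the product of the maximum number of $U$-blocks and the smallest block length that guarantees a self-loop via pigeonhole, which is what makes the compression/decompression cycle tight in both directions.
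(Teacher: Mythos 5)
Your proposal is correct and follows essentially the same route as the paper's proof: decompose the $U$-restricted orbit into at most $|U||Q|$ constant blocks, observe by pigeonhole on the at most $|Q|n$ total state changes that any block spanning $|Q|n+1$ consecutive configurations must contain two consecutive identical full configurations (a self-loop that can be pumped up or down), and use this to translate between length-$L$ orbits and orbits of arbitrary length. Your write-up merely makes explicit the compression/re-inflation bookkeeping that the paper compresses into its one-line "key observation."
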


\begin{proof}
	Any orbit restricted to $U$ of any length can be seen as a sequence of elements of $Q^U$ and, since the considered automata network is freezing, there are at most $|U||Q|$ changes in this sequence so that it can be written ${p_1^{t_1}p_2^{t_2}\cdots p_m^{t_m}}$ with ${m\leq |U||Q|}$, $p_i\in Q^U$ and ${t_i\in\N}$. The key observation is that ${p_1^{t_1}p_2^{t_2}\cdots p_{i-1}^{t_{i-1}}p_i^{|Q|n+1}p_{i+1}^{t_{i+1}}\cdots p_m^{t_m}}$ is a valid restricted orbit if and only if ${p_1^{t_1}p_2^{t_2}\cdots p_{i-1}^{t_{i-1}}p_i^{T}p_{i+1}^{t_{i+1}}\cdots p_m^{t_m}}$ is a valid restricted orbit for all ${T\geq|Q|n+1}$: this is because any sequence of ${|Q|n+1}$ configurations in any orbit must contain two consecutive identical configurations since ${|Q|n}$ is the maximal total number of possible state changes. From this we deduce that it is sufficient to know all the restricted orbits of the form ${p_1^{t_1}p_2^{t_2}\cdots p_m^{t_m}}$ with ${t_i\leq |Q|n+1}$ and ${m\leq |U||Q|}$ to know all restricted orbits of any length. The lemma follows.
\end{proof}

Note that as a consequence of the latter lemma, for any freezing non-deterministic automata network it suffices to consider $t$-specifications with $t$ being linear in the size of the interaction graph defining the network.

\emph{Specification checking problem.}
We observe also that the number of possible $t$-specifications can be represented in polynomial space (as a Boolean vector indicating the allowed $t$-specifications). Also, in the absence of explicit mention, all the considered graphs will have bounded degree $\Delta$ by default, so a freezing automata network rule can be represented as the list of local update rules for each node which are maps of the form ${Q^\Delta\to \mathcal{P}(Q)}$ whose representation as transition table is of size ${O\bigl(|Q|^{\Delta+1}\bigr)}$  .
The specification checking problem we consider asks whether a given freezing automata network verifies a given localized trace property on the set of orbits whose restriction on each node adheres to a given $t$-specification. In order to do that, we introduce the concept of a satisfiable $t$-specification
\begin{definition}
	Let $\mathcal{A}=(G,\mathcal{F})$ be a non-deterministic automata network and let $\mathcal{E}_t$ a $t$-specification. We say that $\mathcal{E}_t$ is satisfiable by $\mathcal{A}$ if there exists an orbit  $O \in \mathcal{O}(\mathcal{A},t)$ such that $O_v \in \mathcal{E}_t(v)$ for every $v \in V.$ 
\end{definition}
If $\mathcal{E}_t$ is a satisfiable $t$-specification for some automata network $\mathcal{A}$ we write $\mathcal{A} \models \mathcal{E}_t.$ We present now the $\textit{Specification checking problem}$ as the problem of verifying whether a given $t$-specification is satisfiable by some automata network $\mathcal{A}.$
\begin{problem}[Specification checking problem (\textsc{SPEC})]\ 
  \label{prob:spectchecking}
  \begin{description}
  \item[Parameters: ] alphabet $Q$, family of graphs $\mathcal{G}$ of max degree $\Delta$.
  \item[Input: ]\ 
    \begin{enumerate}
    \item a non-deterministic freezing automata network
      $\mathcal{A}=(G,F)$ on alphabet $Q$, with set of nodes
      $V$ and ${G\in\mathcal{G}}$;
      \item a time $t\in \mathbb{N}$.
    \item a $t$-specification $\mathcal{E}_t$
  \end{enumerate}
\item[Question: ] $\mathcal{A} \models \mathcal{E}_t$
  \end{description}
\end{problem}
We remark that it could be also possible to present some sort of universal version of the latter problem, in which we could ask not only if given $t$-specification $\mathcal{E}_t$ is satisfiable in the sense of checking for the \textbf{existence} of some orbit of the system verifying some property coded in $\mathcal{E}_t$ but checking if \textbf{every} orbit verifies the latter property.
%
%

\emph{Four canonical problems.} When studying a dynamical system, one is often interested in determining properties of the future state of the system given its initial state. In the context of automata networks, various decision problems have been studied where a question about the evolution of the dynamics at a given node is asked. Usually, the computational complexity of such problems is compared to the complexity of simulating the automata network. Roughly, one can observe that some systems are complex in some way if the complexity of latter problems are "as much as hard" as simply simulating the system.

\begin{problem}[Prediction problem]\ 
  \label{prob:prediction}
  \begin{description}
  \item[Parameters: ] alphabet $Q$, family of graphs $\mathcal{G}$ of max degree $\Delta$
  \item[Input: ]\ 
    \begin{enumerate}
    \item a deterministic freezing automata network
      $\mathcal{A}=(G,F)$ on alphabet $Q$, with set of nodes
      $V$ with ${n=|V|}$ and $G\in\mathcal{G}$;
    \item an initial configuration $c\in Q^V$;
    \item a node $v\in V$ and a time ${t\in\N}$;
    \item A $t$-specification $\mathcal{E}_t$ satisfying: for all $y \in \mathcal{E}_t(v), y_0 = c_v.$
  \end{enumerate}
\item[Question: ] $\mathcal{A} \models \mathcal{E}_t$
  \end{description}
\end{problem}

Note that this prediction problem is clearly a subproblem of \textsc{SPEC}. Also, observe that a specification allows us to ask various questions considered in the literature: what will be the state of the node at a given time \cite{GolOlThey15,GriMoo96}, will the node change its state during the evolution \cite{GolesMMO17,GolesMMO18,goles2019complexity}, or, thanks to Lemma~\ref{lem:trace-length}, what will be state of the node once a fixed point is reached \cite[section 5]{arxivOllinger19}.
Note that the classical circuit value problem for Boolean circuits easily reduces to the prediction problem above when we take $G$ to be the DAG of the Boolean circuit and choose local rules at each node that implement circuit gates. Theorem~\ref{theo:pcompleteprediction} below gives a much stronger result using such a reduction where the graph and the rule are independent of the circuit.

We now turn to the classical problem of finding predecessors back in time to a given configuration \cite{Kawachi2019,green97}.

\begin{problem}[Predecessor Problem]\ 
  \label{prob:predecessor}
  \begin{description}
  \item[Parameters: ] alphabet $Q$, family of graphs $\mathcal{G}$ of max degree $\Delta$
  \item[Input: ]\ 
    \begin{enumerate}
    \item a deterministic freezing automata network
      $\mathcal{A}=(G,F)$ on alphabet $Q$, with set of nodes
      $V$ with ${n=|V|}$ and $G\in\mathcal{G}$ ;
    \item a configuration $c\in Q^V$
    \item a time $t\in\N$
  \end{enumerate}
\item[Question: ] ${\exists y\in Q^V : F^t(y)=c}$?
  \end{description}
\end{problem}
Note that, analogously to the previous case,  the final configuration in the input can be given through a particular $t$-specification $\mathcal{E}_t$, such that for all $y \in \mathcal{E}_t(v): y_t = c$ for any $v \in V$. Thus, by considering  $\mathcal{E}_t$ we can see predecessor problem as a subproblem of \textsc{SPEC}.

Deterministic automata networks have ultimately periodic orbits. When they are freezing, any configuration reaches a fixed point. Nilpotency asks whether their is a unique fixed point whose basin of attraction is the set of all configurations. It is a fundamental problem in finite automata networks theory \cite{Richard_2019,Gadouleau_2016} as well as in cellular automata theory where the problem is undecidable for any space dimension \cite{kari92}, but whose decidability depends on the space dimension in the freezing case \cite{arxivOllinger19}.

\begin{problem}[Nilpotency problem]\ 
  \label{prob:nilpotency}
  \begin{description}
  \item[Parameters: ] alphabet $Q$, family of graphs $\mathcal{G}$ of max degree $\Delta$
  \item[Input: ] a deterministic freezing automata network
      $\mathcal{A}=(G,F)$ on alphabet $Q$, with set of nodes
      $V$ and $G\in\mathcal{G}$;
\item[Question: ] is there $t\geq 1$ such that ${F^t(Q^V)}$ is a singleton?
  \end{description}
\end{problem}

In this case, it is not clear that Nilpotency is actually a subproblem of $\textsc{SPEC}$. However, we will show that we can solve Nilpotency by solving a polynomial amount of instances (linear on the size of the interaction graph of the network $|G|$) for \textsc{SPEC} in parallel. More precisely, we show that there exist a NC Turing reduction. In order to do that note first that we can use Lemma~\ref{lem:trace-length} to fix ${t=\lambda(n)}$, where ${\lambda(n)}$ is an appropriate polynomial. Then, we express that ${F^{\lambda(n)}(Q^V)}$ is a singleton as the following formula, which intuitively says that for each node there is a state such that all orbits terminate in that state at this node: $\bigwedge_{s\in V}\bigvee_{q_0 \in Q} \mathcal{A} \models \mathcal{E}^{q_0,s}_{\lambda(n)},$
where $\mathcal{E}^{q_0,s}_{\lambda(n)}$ are $\lambda(n)$-specifications satisfying $\mathcal{E}^{q_0,s}_{\lambda(n)}(v) = Q^t,$ for every $v \not = s$ and $\mathcal{E}^{q_0,s}_{\lambda(n)}(s)$ is the set of orbits $y$ such that ${y_{\lambda(n)} = q_0}$.  The reduction holds.

It is straightforward to reduce coloring problems (does the graph admit a proper coloring with colors in $Q$) and more generally tilings problems to nilpotency using an error state that spread across the network when a local condition is not satisfied (note that tiling problem are known to be tightly related to nilpotency in cellular automata \cite{kari92}). Using the same idea one can reduce SAT to nilpotency by choosing $G$ to be the DAG of a circuit computing the given SAT formula (see Theorem~\ref{thm:nilpotency} below for a stronger reduction that works on any family of graphs with polynomial treewidth).

\begin{remark}
  \label{rem:nilpotencyontrees}
  If we allow the input automata network to be associated to a graph of unbounded degree (the local rule is then given as a circuit), it is possible to reduce any state instance to an automata network on a star graph with alphabet ${Q=\{0,1,\epsilon\}}$ where the central node simply checks that the Boolean values on leafs represent a satisfying instance of the SAT formula and produces a $\epsilon$ state that spreads over the network if it is not the case. The circuit representing the update rule of each node is NC in this case, and the automata network is nilpotent if and only if the formula is not satisfiable. 
\end{remark}

Given a deterministic freezing automata network of global rule ${F:Q^V\rightarrow Q^V}$, we define the associated non-deterministic global rule ${F^\ast}$ where each node can at each step to apply $F$ or to stay unchanged, formally: ${F^\ast_v(c) = \{F_v(c),c_v\}}$. It represents the system $F$ under totally asynchronous update mode.

\begin{problem}[Asynchronous reachability Problem]\ 
  \label{prob:reachability}
  \begin{description}
  \item[Parameters: ] alphabet $Q$, family of graphs $\mathcal{G}$ of max degree $\Delta$
  \item[Input: ]\ 
    \begin{enumerate}
    \item a deterministic freezing automata network
      $\mathcal{A}=(G,F)$ on alphabet $Q$, with set of nodes
      $V$ with ${n=|V|}$ and $G\in\mathcal{G}$;
    \item an initial configuration $c_0\in Q^V$
    \item a final configuration $c_1\in Q^V$
  \end{enumerate}
\item[Question: ] can ${c_1}$ is reached starting from $c_0$ under $F^\ast$?
  \end{description}
\end{problem}
Note that no bound is given in the problem for the time needed to reach the target configuration. However, Lemma~\ref{lem:trace-length} ensures that $c_1$ can be reached from $c_0$ if and only if it can be reach in a polynomial number of steps (in $n$). Thus this problem can again be seen as a sub-problem of our \textsc{SPEC} by defining a $\lambda(n)$-specification $\mathcal{E}_{\lambda(n)}$ such that for any $y \in \mathcal{E}_{\lambda(n)}:  y_0 = c_0 \wedge y_{\lambda(n)} = c_1$. This bound on the maximum time needed to reach the target ensures that the problem is NP (a witness of reachability is an orbit of polynomial length). Note that the problem is PSPACE-complete for general automata networks: in fact it is PSPACE-complete even when the networks considered are one-dimensional (network is a ring) cellular automata (same local rule everywhere) \cite{Dennunzio2017}.

\section{A fast-parallel algorithm for the Specification Checking Problem}

In this section we present a fast-parallel algorithm for solving the \textsf{Specification Checking Problem} when the input graph is restricted to the family of graphs with bounded degree and treewidth. More precisely, we show that the problem can be solved by a CREW PRAM that runs in the time $\mathcal{O}(\log^2(n))$ where $n$ is the amount of nodes of the network. Thus, restricted to graphs of bounded degree and bounded treewidth, \textsf{Specification Checking Problem} belongs to the class $\textbf{NC}$. 

To explain how our main algorithm  solve the latter problem, we will divide it in a number of sub-routines,  that can be executed efficiently in parallel.  Then, we will present an NC algorithm for  \textsf{Specification Checking problem} as a combination of this sub-routines. 
We begin fixing sets $Q$, $\mathcal{G}$,  and natural numbers $\Delta$ and $k$.  Let $\mathcal{A} = (G, \mathcal{F})$, $t$ and $\mathcal{E}_t$ be an instance of the \textsf{Specification Checking Problem}, that we consider for the following definitions.
\begin{definition}\label{def:locallyvalid} 
	A \emph{locally-valid trace of a node $v \in V$} is a function $\alpha: N[v]  \rightarrow Q^t$ such that:
	\begin{enumerate}
		\item $\alpha(v)_{s+1} \in F_v( (\alpha(u)_s)_{u \in N[v]})$ for all $0\leq s < t$,
		\item $\alpha(v)$ belongs to $\mathcal{E}_t(v)$.
	\end{enumerate} 
	We call the set of all locally-valid traces of $v$ as $LVT(v)$
\end{definition}

Roughly speaking, a locally-valid trace of a vertex $v$ is a sequence of state-transitions of all the vertices in $N[v]$ which are consistent with local rule of $v$, but not necessarily consistent with the local-rules of the vertices in $N(v)$.   We also ask that the state-transitions of $v$ satisfy the $(\{v\}, Q, t)$-specification $\mathcal{I}_v$.

Given two finite sets $A, B$, and a function $f: A \to B$. We define the restriction function of  $f$ to a subset $A' \subseteq A$ as the function $f|_A': A' \to B$ such that, for all $v \in A'$ we have that $f|_A'(v) = f(v)$.
\begin{definition}\label{def:partiallyvalid} 
	Let $U\subseteq V$ be a subset of nodes. A \emph{partially-valid trace of a set of nodes $U\subseteq V$} is a function $\beta: N[U] \rightarrow Q^t$ such that
 $\beta |_{N[v]}$ belongs to $LVT(v)$ for each $v \in U $.

	We call the set of all partially-valid traces of $U$ as $PVT(U)$
\end{definition}

Roughly, a partially-valid trace for a set $U$ is a sequence of state-transition of all the vertices in $N[U]$, which are consistent with the local rules of all vertices in $U$, but not necessarily consistent with the local-rules of the vertices in $N(U)$. 

Let $(W, F, \{X_w: w\in W\})$ be a rooted binary-tree-decomposition of graph $G$ with root $r$, that we assume that has width at most $(3\tw(G)+2)$. For $w\in W$, we  call $T_w$ the set of all the descendants of $w$, including $w$. 

Our algorithm consists in a dynamic programming scheme over the bags of the tree. First, we assume that $PVT(X_w)$ is nonempty for all bags $w \in W$, otherwise the answer of the \textsf{Specification Checking problem} is false. For each bag $w \in T$ and $\beta^w \in PVT(X_w)$ we call $\Sol_w(\beta^w)$ the partial answer of the problem on the vertices contained bags in $T_w$, when the locally-valid traces of the vertices in $X_w$ are induced by $\beta^w$.  We say that $\Sol_w(\beta^w) = \textbf{accept}$ when it is possible to extend $\beta^w$ into a partially-valid trace of all the vertices in bags of $T_w$, and \textbf{reject} otherwise. More precisely, if $w$ is a leaf of $T$, we define $\Sol_w(\beta^w) = \textbf{accept}$ for all $\beta^w \in PVT(X_w)$. For the other bags, $\Sol_w(\beta^w) = \textbf{accept}$ if and only if exists a $ \beta \in PVT(\bigcup_{z \in T_w} X_z)$ such that $\beta(u) = \beta^w(u)$, for all $u \in X_w$.  Observe that the instance of the \textsf{Specification Checking problem} is accepted when there exists a $\beta^r \in PVT(X_r)$ such that $\Sol_r(\beta^r) = \textbf{accept}$.
The following lemma is the core of our dynamic programming scheme: 

\begin{lemma}\label{lem:dynprog}
	Let $w$ be a bag of $T$ that is not a leaf and $\beta^w \in PVT(X_w)$. Then $\Sol_w(\beta^w) = \textbf{accept}$ and only if for each child $v$ of $w$ in $T_w$ there exists a  $\beta^{v} \in PVT(X_{v})$  such that
	\begin{enumerate}
		\item $\beta^w(u) = \beta^{v}(u)$ for all $u\in N[X_w] \cap N[X_{v}]$, 
		\item $\Sol_{v}(\beta^{v}) = \textbf{accept} $ 
	\end{enumerate}
	
\end{lemma}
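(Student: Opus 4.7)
The plan is to prove both directions of the equivalence by working directly with the witnessing extensions: the forward direction is a restriction argument, and the backward direction is a gluing argument along the tree decomposition. For the forward direction, assume $\Sol_w(\beta^w) = \textbf{accept}$, so there exists $\beta \in PVT(\bigcup_{z \in T_w} X_z)$ extending $\beta^w$. For each child $v$ of $w$, I would set $\beta^v := \beta|_{N[X_v]}$. Membership in $PVT(X_v)$ is inherited from $\beta$ because $X_v \subseteq \bigcup_{z \in T_w} X_z$ forces local validity at each vertex of $X_v$ to be preserved. The agreement $\beta^w(u) = \beta^v(u)$ on $N[X_w] \cap N[X_v]$ is immediate since both sides are restrictions of $\beta$, and $\Sol_v(\beta^v) = \textbf{accept}$ is witnessed by $\beta|_{N[\bigcup_{z \in T_v} X_z]}$, well-defined since $\bigcup_{z \in T_v} X_z \subseteq \bigcup_{z \in T_w} X_z$.

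For the backward direction, for each child $v$, the hypothesis $\Sol_v(\beta^v) = \textbf{accept}$ yields $\beta_v \in PVT(\bigcup_{z \in T_v} X_z)$ extending $\beta^v$. I would paste these together with $\beta^w$ to construct $\beta : N[\bigcup_{z \in T_w} X_z] \to Q^t$ by declaring $\beta(u) = \beta^w(u)$ whenever $u \in N[X_w]$, and $\beta(u) = \beta_v(u)$ on $N[\bigcup_{z \in T_v} X_z] \setminus N[X_w]$ for the applicable child $v$. The main obstacle is proving that this $\beta$ is well-defined, which boils down to two separator-type claims about closed neighborhoods in a tree decomposition: (a) if $u \in N[X_w] \cap N[\bigcup_{z \in T_v} X_z]$ then $u \in N[X_v]$, and (b) if $u \in N[\bigcup_{z \in T_{v_1}} X_z] \cap N[\bigcup_{z \in T_{v_2}} X_z]$ for distinct children $v_1, v_2$, then $u \in N[X_w]$, which reduces this conflict back to case (a).

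Both claims can be verified by a short case analysis on whether $u$ lies inside the relevant bag union or only has an edge to it. When $u$ lies in two bag unions simultaneously, the subtree-intersection axiom directly places $u$ in the separator bag. When $u$ only has a neighbor $u'$ in a bag union, locating the bag $X_{z'}$ that contains the edge $uu'$ and tracing the tree-path from $z'$ to a bag in $T_v$ containing $u'$ forces $u' \in X_v$ (since exiting $T_v$ on this path must pass through $v$ by the path-connectivity axiom), which places $u \in N[X_v]$ as required; the analogous argument handles case (b). Once well-definedness is established, hypothesis (1) together with the extension property of $\beta_v$ ensures the values being glued coincide. The final verification that $\beta \in PVT(\bigcup_{z \in T_w} X_z)$ is then immediate: each $u$ in this union is either in $X_w$, in which case $N[u] \subseteq N[X_w]$ so $\beta|_{N[u]} = \beta^w|_{N[u]} \in LVT(u)$, or in $\bigcup_{z \in T_v} X_z$ for some child $v$, in which case $N[u] \subseteq N[\bigcup_{z \in T_v} X_z]$ and $\beta|_{N[u]} = \beta_v|_{N[u]} \in LVT(u)$.
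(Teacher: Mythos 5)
Your proposal is correct and follows essentially the same route as the paper's proof: the forward direction by restricting the witnessing extension, and the backward direction by gluing the children's extensions with $\beta^w$ and establishing well-definedness via the same two separator claims ($N[X_w]\cap N[\bigcup_{z\in T_v}X_z]\subseteq N[X_v]$ and $N[\bigcup_{z\in T_{v_1}}X_z]\cap N[\bigcup_{z\in T_{v_2}}X_z]\subseteq N[X_w]$), proved by the same case analysis on whether $u$ lies in the bag union or merely has an edge into it. No substantive differences.
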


\begin{proof}
	First, let us assume that $\Sol_w(\beta^w) = \textbf{True}$ and let $v$ be on of the children of $w$ in $T_w$. This implies that there exists a partially-valid trace $ \beta \in PVT(\bigcup_{z \in T_w} X_z)$ such that $\beta(u) = \beta^w(u)$, for all $u \in X_w$.  Observe  that $N[\cup_{z\in T_v} X_z]  \subseteq N[(\bigcup_{z \in T_w} X_z]$.   Since $\beta$ is defined over $N[(\bigcup_{z \in T_w} X_z]$, we can define $\beta^{v}$ and $\beta^{T_{v}}$ as the restrictions of $\beta$ to the sets $N[X_{v}]$ and $N[\cup_{z\in T_{v}} X_z]$, respectively. Observe that $\beta^{v}$ satisfies the condition (1) and (2) because, by definition, $\beta^{v}(u)$ and $\beta^w(u)$ are both equal to $\beta(u)$ for all $u\in N[X_w] \cap N[X_{v}]$. Moreover, $\Sol_{w_L}(\beta^{v}) = \textbf{accept}$ because $\beta^{T_v}$ is a partially-valid trace of $\bigcup_{z \in T_v} X_z$ such that $\beta^{v}(u)  = \beta(u) = \beta^{T_v}(u)$ for each $u \in X_v$.

	Conversely, suppose that we have that conditions (1), (2) for each child of $w$.  If $w$ is a leaf the proposition is trivially true. Suppose then that $w$ is not a leaf.  For each child $v$ of $w$, let  $\beta^v$ be the partially-valid trace of $X_v$ satisfying that $\Sol_v(\beta^v) = \textbf{accept}$ and $\beta^v(u) = \beta^w(u)$ for each $u\in N[X_w] \cap N[X_v]$. Since $\Sol_v(\beta^v) = \textbf{accept}$ we know that  $\beta^v$ can be extended into a partially-valid trace of  $\cup_{z\in T_v} X_z$, that we call $\beta^{T_v}$.   Let us call $v_1$ and $v_2$ the children of $w$. We define then the function $\beta: N[\cup_{z\in T_w} X_z] \rightarrow Q^t$. 
	
	$$\beta(u) =\left\{ \begin{array}{rcl} 
	\beta^{w} (u) & \textrm{ if } & u \in N[X_w]\\
	\beta^{T_{v_1}} (u) & \textrm{ if } &  u \in  N[\bigcup_{z \in T_{v_1}}X_z]  \\
	\beta^{T_{v_2}} (u) & \textrm{ if } &  u \in  N[\bigcup_{z \in T_{v_2}}X_z]\
	\end{array} \right. $$ 
	We claim that there is no ambiguity in the definition of $\beta$. First, we claim that $N[\bigcup_{z \in T_{v_1}}X_v] \cap N[\bigcup_{z \in T_{v_2}}X_v]$ is contained in $N[X_u]$. Indeed, let $u$ be a vertex in $N[\bigcup_{z \in T_{v_1}}X_z] \cap N[\bigcup_{z \in T_{v_2}}X_z]$. There are three possibilities: 
	\begin{itemize}
		\item $u$ belongs to a bag in $T_{v_1}$ and to another bag in $T_{v_2}$ . In this case necessarily $u \in X_w$, because otherwise the bags containing $u$ would not induce a (connected) subtree of $T$.
		\item $u$ is not contained in a bag of $T_{v_1}$. Since $u$ belongs to $N[\bigcup_{z \in T_{v_1}}X_z]$, there exists a vertex $\tilde{u}$  adjacent to $u$ and contained in a bag of $T_{v_1}$. Note that $X_w$ contains $\tilde{u}$, because otherwise all the bags containing $\tilde{u}$ would be in $T_{v_1}$. Then, no bag would contain both $u$ and $\tilde{u}$. That contradicts the property of a tree-decomposition that states that for each edge of the graph $G$, there must exist a bag containing both endpoints. We deduce $\tilde{u}$ is contained in $X_w$ and then $u$ is contained in $N[X_w]$.
		\item $u$ is not contained in a bag of $T_{v_2}$. This case is analogous to the previous one. 
		
	\end{itemize}
	Following an analogous argument, we deduce that $N[\bigcup_{z \in T_{v_1}}X_z]  \cap N[X_w]$ is contained in $N[X_{v_1}]$ and that $N[\bigcup_{z \in T_{v_2}}X_z]  \cap N[X_w]$ is contained in $N[X_{v_2}]$.  We deduce that $\beta$ is well defined. Moreover, $\beta$ is a partially-valid trace of $\bigcup_{z\in T_w} X_z$ which restricted to $N[X_w]$ equals $\beta^w$. We conclude that $\Sol_w(\beta^w) = \textbf{accept}$. 
\end{proof}

In order to solve our problem efficiently in parallel, we define a data structure that allows us efficiently encode locally-valid traces and partially-valid traces. More precisely, in $N[v]$ there are at most $|Q|^{\Delta}$ possible state transitions. Therefore, when $t$ is comparable to $n$, most of the time the vertices in $N[v]$ remain in the same state. Then, in order to efficiently encode a trace, it is enough to keep track only of the time-steps on which some state-transition occurs. 

Let $U$ be a set of vertices of $G$. A \emph{$(U,t)$-sequence} $\mathcal{S}$ is a function $\mathcal{S}: U \rightarrow Q^t$ such that the sequence $\mathcal{S}(u)$ is non-decreasing, for all $u\in U$.   For each $0 \leq s \leq t$ let us call $\mathcal{S}_s$ the sequence $(\mathcal{S}(u)_s)_{u\in U} \in |Q|^{|U|}$. Let  $\textsf{Times}(\mathcal{S}) = (t_0, t_1, \dots, t_\ell)$  be the strictly increasing sequence of minimum length satisfying that $\mathcal{S}_{t_i} = \mathcal{S}_{s}$ for each $t_i \leq s < t_{i+1}$ and each $0 \leq i < \ell$. Observe that $t_0 = 0$ and $\ell = \ell(\mathcal{S}) \leq |Q|^{|U|}$.
For a natural numbers $m$ and $\ell$, let us call $\langle m \rangle_\ell$ the binary representation of $m$ using $\ell$ bits, padded with $\ell - \lceil \log m \rceil$ zeros when $\ell > \lceil \log m\rceil $.

\begin{definition}\label{def:encoding}
	
	Let $\mathcal{S}$ be a  $(U,t)$-sequence.  A \emph{succinct representation of} $\mathcal{S}$, that we call $\epsilon(\mathcal{S})$ is a pair $(\Times(\mathcal{S}), \States(\mathcal{S}))$ such that:
	\begin{itemize}
		\item $\Times(\mathcal{S})$ is a list of elements of $\{0,1\}^{\lceil \log (t+1) \rceil}$ of length  $|Q|^{|U|}$, such that $$\Times(\mathcal{S})_i = \left\{  \begin{array}{ccc} \langle t_i \rangle_{\lceil \log (t+1) \rceil} & \textrm{ if } i \leq \ell(\mathcal{S})\\ \langle t \rangle_{\lceil \log (t+1) \rceil} & \textrm{ if } i > \ell(\mathcal{S}) \end{array}\right.$$
		
		\item $\States(\mathcal{S})$ is a matrix of elements of $\{0,1\}^{\lceil \log |Q| \rceil }$ of dimensions $|Q|^{|U|} \times  |U|$, such that, if we call $u_1, \dots, u_{|U|}$ the vertices of $U$ sorted by their labels, then:
		$$\States(\mathcal{S})_{i,j} = \left\{ \begin{array}{cl} \langle \mathcal{S}(u_j)_{t_i} \rangle_{\lceil \log |Q| \rceil} & \textrm{ if } i \leq \ell(\mathcal{S}) \\ \langle \mathcal{S}(u_j)_{t} \rangle_{\lceil \log |Q| \rceil}  & \textrm{ if } i > \ell(\mathcal{S})  \end{array}. \right. $$
		
	\end{itemize}
	We also call $\#(U,t) = |Q|^{|U|} \lceil \log (t+1) \rceil + |U| |Q|^{|U|} \lceil \log |Q| \rceil$
\end{definition}

Observe that $\epsilon(\mathcal{S})$ can be written using exactly  $N = \#(U,t)$ bits.  In other words, the succinct representations of  all $(U,t)$-sequences can be stored in the same number of bits, which is $\cO(|U|\log t)$.  Therefore, there are at most $2^{N}  = t^{g(|U|)}$ possible $(U,t)$-sequences, for some function $g$ exponential in $|U|$. Moreover, we identify the succinct representation of $(U,t)$-sequence $\mathcal{S}$  with a number $x \in \{0 \dots, 2^{N}\}$, such that $\epsilon(\mathcal{S}) = \langle x \rangle_{N}$.

The restriction of $\mathcal{E}_t$ to the nodes in $U$ is denoted $\mathcal{E}_t(U)$. When $U = \{u\}$ we denote $\mathcal{E}_t(\{u\})$ simply $\mathcal{E}_t(u)$.

\begin{definition} \label{lem:specificationencoding} Let $U$ be a set of vertices an let us call $N = \#(U,t)$. A \emph{succinct representation} of a $\mathcal{E}_t(U)$ is a Boolean vector $\mathcal{X} = \epsilon(\mathcal{E}_t(U))$ of length $2^N$ such that $\mathcal{X}_i = \textsf{True}$ when $i$ represents the succinct representation of a $(U,t)$-sequence contained in~$\mathcal{E}_t(U)$.

\end{definition} 

 Next lemma states that the succinct representation of a $(U,t)$-specification can be computed by fast parallel algorithm. 

\begin{lemma}
\label{lem:algospecification}
	For each set of vertices $U$, there exist a function $f$ and CREW PRAM algorithms performing the following tasks in time $f(|U||Q|)\log n$ using $n^{f(|U||Q|)}$ processors: 
	
	\begin{itemize}
		\item Given a $(U,t)$-sequence $\mathcal{S}$ as a $t \times |U|$ table of states in $Q$, compute $\epsilon(\mathcal{S})$
		\item Given a $\mathcal{E}_t(U)$ as a list of $(U,t)$-sequences, compute $\epsilon(\mathcal{E}_t(U))$
	\end{itemize}
\end{lemma}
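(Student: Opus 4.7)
The plan is to exploit the observation that any non-decreasing $(U,t)$-sequence has at most $|Q|^{|U|}$ distinct values along time, so building its succinct encoding only requires detecting the (at most $|Q|^{|U|}$) change-times in parallel and then scattering them into the arrays $\Times(\mathcal{S})$ and $\States(\mathcal{S})$ in the right order. For the first task, given $\mathcal{S}$ as a $t \times |U|$ table, I would first allocate one processor to each time-step; processor $s$ computes the Boolean $c_s = [\mathcal{S}_s \neq \mathcal{S}_{s-1}]$ for $s \geq 1$ (and $c_0 = 1$) via an $\cO(\log |U|)$-depth conjunction over the $|U|$ coordinatewise comparisons. I would then apply the prefix-sum algorithm of Proposition~\ref{prop:prefixsum} with integer addition to the vector $(c_0,\dots,c_t)$, obtaining for every $s$ the rank $r_s$ of $s$ among the change-times, as well as the total count $\ell(\mathcal{S})+1$. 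Each processor with $c_s = 1$ then writes $\langle s \rangle_{\lceil \log (t+1)\rceil}$ into slot $r_s - 1$ of $\Times(\mathcal{S})$ and the row $(\langle \mathcal{S}(u_j)_s \rangle_{\lceil \log |Q| \rceil})_j$ into row $r_s - 1$ of $\States(\mathcal{S})$; since the ranks are pairwise distinct, no two processors target the same slot, so the writes are legal on a CREW machine. Finally, $|Q|^{|U|}$ dedicated processors pad the trailing slots (indices $> \ell(\mathcal{S})$) with $\langle t \rangle$ and the states at time $t$, exactly as prescribed by Definition~\ref{def:encoding}.

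For the second task, I would run the algorithm above in parallel on each of the $m$ sequences $\mathcal{S}^{(1)},\dots,\mathcal{S}^{(m)}$ of the input list, producing the succinct encodings $\epsilon(\mathcal{S}^{(j)})$ and, by plain bit-concatenation in the fixed order of Definition~\ref{def:encoding}, the corresponding indices $x_j \in \{0,\dots,2^N-1\}$. The output vector $\mathcal{X}$ has length $2^N$ which remains polynomial in $n$ once $|U|$ and $|Q|$ are absorbed into a function of $|U||Q|$ (recall that $t$ is polynomial in $n$ by Lemma~\ref{lem:trace-length}), so I can initialise $\mathcal{X}$ to \textsf{False} in $\cO(\log n)$ time and then compute, for every output index $y \in \{0,\dots,2^N-1\}$ in parallel, the disjunction $\bigvee_{j=1}^m [x_j = y]$ via a balanced binary-tree reduction of depth $\cO(\log m) = \cO(\log n)$. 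The total processor budget is bounded by $m \cdot 2^N = n^{f(|U||Q|)}$ for a suitable $f$, and the overall running time is $\cO(\log n)$ multiplied by a factor depending only on $|U||Q|$, matching the claimed bound.

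The main obstacle is the last step of the second task: since several input sequences may have identical succinct encodings, a naive ``for each $j$, set $\mathcal{X}[x_j]$ to \textsf{True}'' would violate the concurrent-write restriction of the CREW model. The reduction-by-OR scheme above circumvents this cleanly, at the price of the larger $m \cdot 2^N$ processor count, which is still within the allowed budget because $2^N$ is polynomial in $n$ with exponent depending on $|U||Q|$. Once this issue is handled, correctness amounts to verifying that the prefix-sum ranks correctly index the change-times and that the chosen bit-concatenation is exactly the one prescribed by Definition~\ref{def:encoding}, both of which are immediate from the definitions.
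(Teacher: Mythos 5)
Your proof is correct and follows essentially the same route as the paper's: change-times are detected by coordinatewise comparison of consecutive rows in parallel, ranked by the prefix-sum algorithm of Proposition~\ref{prop:prefixsum}, and scattered into $\Times$ and $\States$ with the prescribed padding, and the second task is handled by running this encoder in parallel over the list. The only divergence is your OR-reduction to avoid concurrent writes into $\mathcal{X}$; the paper simply lets each processor write its own bit, which is already exclusive-write because $\epsilon$ is injective on $(U,t)$-sequences (so distinct elements of the set $\mathcal{E}_t(U)$ receive distinct indices), making your extra precaution harmless but unnecessary.
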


\begin{proof}~
	\begin{itemize}
		\item
		The algorithm first computes $\textsf{Times}(\mathcal{S})$. Then, it constructs the list $\Times(\mathcal{S})$ and the matrix $\States(\mathcal{S})$ copying the lines of $\mathcal{S}$ given in $\textsf{Times}(\mathcal{S})$. 
		
		The algorithm starts reserving $N = \#(U,t)$ bits of memory for in the list $\Times$ and the matrix $\States$, and $t+1$ bits of memory represented in a vector $\textsc{indices}$. The vector $\textsc{indices}_s$ sores the time-steps on which that belong to $\textsf{Times}(\mathcal{S})$. 
		
		For each $i \in \{1, \dots, t\}$ the algorithm initializes a processor $P_i$ and assigns the $i$-th bit of $\textsc{indices}$ to it. Processor $P_i$ looks at the $i$-th and $i-1$-th lines of $\mathcal{S}$. If $\mathcal{S}_{i} \neq \mathcal{S}_{i-1}$  then processor writes a $1$ in $\textsc{indices}_i$. Otherwise, the processor writes a $0$ in $\textsc{indices}_i$. Then $P_i$ stops. All this process can be done in time $\cO(|U|\log|Q| + \log t)$ per processor. 
		
		Then, the algorithm computes the vector $p$ of length $t$ such that $p_j = \sum_{j=1}^i \textsc{indices}_j$, for each $j\in \{1, \dots, t\}$. This process can be done in time $\cO(\log t)$ using $\cO(t)$ processors using the prefix sum algorithm given by \cite{jaja} (Proposition \ref{prop:prefixsum}). Observe that if $ \textsc{indices}_i = 1$ for some index $i$, then $i = \textsf{Times}(\mathcal{S})_{p_i}$. Moreover, $p_t = \ell(\mathcal{S})$.
		
		Once every processor $(P_i)_{0< i\leq t}$ stops, the algorithms reinitialize them. For each $0<i\leq t$, each processor $P_i$ looks at $\textsc{indices}_i$. If $p_i < p_t$ and $\textsc{indices}_i= 0$ then processor $P_i$ stops. If  $p_i = p_{i-1} = p_t$ the processor stops. If $p_i \neq p_t$ and $\textsc{indices}_i= 1$, then the algorithm writes $\langle i \rangle_{\lceil \log (t+1) \rceil}$ in $\Times_{p_i}$, and for each $u \in \{1, \dots, |U|\}$ writes $\langle \mathcal{S}_{i,u}\rangle_{\lceil \log |Q| \rceil}$ in $\States_{p_i,u}$. If $p_i = p_t$ and $p_{i-1} \neq p_i$, then the processor $P_i$ writes $\langle t \rangle_{\lceil \log (t+1) \rceil}$ in $\Times_{j}$ and writes $\langle \mathcal{S}_{t,u}\rangle_{\lceil \log |Q| \rceil}$ in $\States_{j,u}$ for each $p_i \leq j \leq |Q|^{|U|}$  and for each $u \in \{1, \dots, |U|\}$. The algorithm writes $\Times_0 = \langle 0 \rangle_{\lceil \log (t+1) \rceil}$  and writes $\langle \mathcal{S}_{0,u}\rangle_{\lceil \log |Q| \rceil}$ in $\States_{j,u}$ for each $u\in \{1, \dots, |U|\}$.  All this process can be done in time $\cO(|Q|^{|U|} \log t)$ per processor. 
		
		The algorithm returns $\epsilon(\mathcal{S}) = (\Times, \States)$. The whole process takes time  $\cO(|Q|^{|U|} \log t)$ and $\cO(t)$ processors. 
		\item The algorithm initializes $\mathcal{X} =\epsilon(\mathcal{E}_t(U))$ as $2^N$ bits of memory bits, all in $0$. Then, it assigns one processor $P_{\mathcal{S}}$ to each $(U, t)$-sequence $\mathcal{S}$ in $\mathcal{E}_t(U)$. For each $\mathcal{S}\in \mathcal{E}_t(U)$, processor  $P_{\mathcal{S}}$  uses the previous algorithm to compute $y = \epsilon(\mathcal{S})$. Then processor $P_{\mathcal{S}}$ writes $\mathcal{X}_y = 1$. 
		Once every processor has finished, the algorithm returns $\mathcal{X}$. The whole process takes time  $\cO(\log|\mathcal{X}| |Q|^{|U|} \log t)$ and uses $|\mathcal{X}|t^{\cO(1)} = n^{\cO(|Q|^{|U|})}$ processors. We deduce that the algorithm runs in time $\cO(|Q|^{|U|}\log t)$ using  $2^N$  processors. \end{itemize}
\end{proof}

Observe that if  $\beta$ is a partially-valid trace of $U$, then in particular  $\beta$ is a $(N[U], t)$-sequence. Therefore, there exists an $x \leq 2^N$ with $N = \#(N[U],t)$,  such that $\epsilon(\beta) = x$.  In the following lemma we show how to characterize the values on  $x \leq 2^N$ that are the encoding of some partially-valid trace of $U$. We need the following definition.
Let $U$ be a set of vertices and let $x\in \{0, \dots, 2^N\}$, with $N= \#(U,t)$. Then we call $\Times(x)$ and $\States(x)$ the vector and matrix such that $x = (\Times(x), \States(x))$. More precisely:
\begin{itemize}
	\item  $\Times(x)$ are the first $|Q|^{|U|}$ bits of $x$ interpreted as sequence of elements of $\{0,1\}^{\lceil \log(t+1)\rceil}$ of length $|Q|^{|U|}$. 
	\item $\States(x)$ are the rest of the bits of $x$ interpreted as the matrix of elements of $\{0,1\}^{\lceil \log |Q|\rceil}$ of dimensions $|Q|^{|U|} \times |U|$. 
\end{itemize}

\begin{lemma}
\label{lem:succintSubsets} Let $S$ be a $(U,t)$-sequence and $Z\subseteq U$. 
	There is a sequential algorithm which given $\epsilon(\mathcal{S})$  computes $\epsilon(\mathcal{S}|_Z)$ in time linear in the size of $\epsilon(\mathcal{S})$.
\end{lemma}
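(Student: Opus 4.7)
The plan is to scan the input succinct representation $\epsilon(\mathcal{S})=(\Times(\mathcal{S}),\States(\mathcal{S}))$ row by row, project each row onto the columns indexed by $Z$, and compress consecutive rows that coincide after projection. The output is written into a buffer of length $\#(Z,t)$ and padded at the end to match Definition~\ref{def:encoding}.

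First, I would split $\epsilon(\mathcal{S})$ into its two components using the fixed offsets: the first $|Q|^{|U|}\lceil \log(t+1)\rceil$ bits form $\Times(\mathcal{S})$ and the remaining bits form $\States(\mathcal{S})$. I would also precompute, once, the list of column indices of $\States(\mathcal{S})$ corresponding to vertices of $Z$, sorted by their labels.

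Second, I would maintain a write index $k=0$ and a variable $r^\star$ holding the last projected row written to the output. For $i$ from $0$ to $|Q|^{|U|}-1$, extract the time $t_i$ stored in $\Times(\mathcal{S})_i$ and the projected row $r_i=(\States(\mathcal{S})_{i,u})_{u\in Z}$. If $i=0$ or $r_i\neq r^\star$, write $\langle t_i\rangle_{\lceil \log(t+1)\rceil}$ into $\Times'_k$, write $\langle r_i\rangle$ into $\States'_{k,\cdot}$, update $r^\star\gets r_i$, and increment $k$. This is correct because $\mathcal{S}|_Z$ changes value at step $s$ if and only if some coordinate in $Z$ changes at step $s$, which by construction of $\Times(\mathcal{S})$ corresponds to exactly one of the recorded times $t_i$ whose projected row differs from the previous one. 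Hence the strictly increasing sequence of retained times equals $\textsf{Times}(\mathcal{S}|_Z)$, and the retained rows equal the corresponding states of $\mathcal{S}|_Z$. In particular $k$ never exceeds $|Q|^{|Z|}$, as required by the format. Once the scan ends, pad the output: for every $k\leq j<|Q|^{|Z|}$ write $\langle t\rangle_{\lceil \log(t+1)\rceil}$ into $\Times'_j$ and $\langle r^\star\rangle$ into $\States'_{j,\cdot}$. Concatenate $(\Times',\States')$ and return this as $\epsilon(\mathcal{S}|_Z)$.

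For the running time, each row of $\States(\mathcal{S})$ is read and processed once in time $O(|U|\lceil \log|Q|\rceil)$, the corresponding time entry in time $O(\lceil \log(t+1)\rceil)$, and each output write costs at most the bit-length of the row written. Summing over all rows gives a cost proportional to $\#(U,t)+\#(Z,t)$, which is linear in $|\epsilon(\mathcal{S})|$. The only subtle point is to check that the padded suffix written in the last step agrees with the suffix prescribed by Definition~\ref{def:encoding}; this follows because once $k$ reaches $\ell(\mathcal{S}|_Z)$, every subsequently scanned projected row equals $r^\star$, so the definition's ``pad with $\langle t\rangle$ and the final state on $Z$'' rule is precisely what the algorithm emits.
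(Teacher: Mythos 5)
Your proposal is correct and follows essentially the same approach as the paper: scan the rows of $\States(\mathcal{S})$, retain exactly those lines where the projection onto the $Z$-columns changes (the paper compares each line to the immediately preceding one, you compare to the last retained line, which is equivalent since unretained lines have unchanged projections), and pad the output to the fixed length $|Q|^{|Z|}$ with the final time and state as prescribed by Definition~\ref{def:encoding}. Your explicit accounting of the linear running time and of why the padded suffix matches the definition is, if anything, slightly more careful than the paper's.
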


\begin{proof}
	
	Let $\kappa = |Q|$. The  computes algorithm  $\epsilon(\mathcal{S}|_Z)$ checking each pair of lines of $\States$ and verifying if the columns of $Z$ differ on any coordinate, keeping only the lines on which some of the vertices in $Z$ switches states for the first time. More precisely, let $u_1, \dots, u_\kappa$ be the set $U$ ordered by their labels.  Let $J \in \{j_1, \dots, j_{|Z|}\}$ be the set of indices of vertices of $Z$ (i.e., $u_{j_q} \in Z$ for all $q \in \{1, \dots, |Z|\}$). The algorithm computes the set $L$ of indices  $i \leq |Q|^\kappa$ such that $i\in L$ if and only if there exists $q \in J$ such that $\States_{i,j_q} \neq \States_{i-1,j_q}$.  Let $\{i_1, \dots, i_{|L|}\}$ the indices in $L$. Observe that $|L| \leq |Q|^{|Z|}$. Then for each $p \leq |Q|^{|Z|}$ and $q \in |Z|$,
	
	$$\Times[Z]_p = \left\{  \begin{array}{ccc} \Times_{i_p} & \textrm{ if } p \leq |L| \\ \langle t \rangle_{\lceil \log (t+1) \rceil} & \textrm{ if } i > |L| \end{array}\right.$$
	
	$$\States[Z]_{p,q} = \left\{ \begin{array}{cl} \States_{i_p, j_q} & \textrm{ if } p \leq |L|  \\ \States_{t, j_q} & \textrm{ if } p > |L|    \end{array}. \right. $$
	
	The algorithm returns $(\Times[Z], \States[Z])$. 
\end{proof}

\begin{lemma}
	 \label{lem:PVTverif}
	Let $U$ be a set of vertices and let $N = \#(N[U],t)$.  There is a sequential algorithm which, given $x \geq 0 $ and  $\epsilon(\mathcal{E}_t(u))$ for each $u\in V(G)$,   decides  in time $f(|N[U]||Q|)\log n$ whether  $x$ is a succinct representation of a partially-valid trace of $U$, where $f$ is an exponential function.
\end{lemma}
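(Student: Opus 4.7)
The plan is to decode $x$ into the underlying $(N[U], t)$-sequence $\beta$ that it represents, and then mechanically verify, for every $v \in U$, the two conditions of Definition~\ref{def:locallyvalid} defining a locally-valid trace of $v$. First I would split $x$ into the pair $(\Times(x), \States(x))$ using the arithmetic decomposition spelled out just before the lemma statement. A preliminary sanity check confirms that $x$ is a syntactically valid succinct representation of some $(N[U],t)$-sequence: the entries of $\Times(x)$ are non-decreasing and bounded by $t$, every column of $\States(x)$ is non-decreasing with respect to the order on $Q$, and the padding convention (entries past position $\ell(\beta)$ coincide with the last meaningful value) is respected. These global checks cost time linear in $|x| = \cO(|N[U]|\,|Q|^{|N[U]|}\log t)$.

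For each vertex $v \in U$, I would treat the two conditions separately. The membership condition $\beta|_{\{v\}} \in \mathcal{E}_t(v)$ is handled by invoking Lemma~\ref{lem:succintSubsets} with $Z = \{v\}$ on $x$ to obtain $\epsilon(\beta|_{\{v\}})$, interpreting the resulting bit-string as an integer $y$, and reading the $y$-th coordinate of the Boolean vector $\epsilon(\mathcal{E}_t(v))$ supplied as input. The rule-compatibility condition is handled piecewise using the succinct structure: between two consecutive entries $t_i$ and $t_{i+1}$ of $\Times(x)$, the restriction $\beta_s|_{N[v]}$ is a constant vector $\gamma^{(i)}$, so one only needs to test that $\gamma^{(i)}_v \in F_v(\gamma^{(i)}|_{N[v]})$ (stability on the plateau) and, when $i < \ell(\beta)$, that $\gamma^{(i+1)}_v \in F_v(\gamma^{(i)}|_{N[v]})$ (the instantaneous transition at time $t_{i+1}$). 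Both membership tests are constant lookups in the local table representing $F_v$, of size $|Q|^{\Delta+1}$. Since $\Times(x)$ contains at most $|Q|^{|N[U]|}$ entries, there are $|U|$ vertices to process, and each elementary step manipulates $\cO(\log n)$-bit quantities, the accumulated cost is bounded by an exponential function of $|N[U]|\,|Q|$ times $\log n$, yielding the claimed running time $f(|N[U]|\,|Q|)\log n$.

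The main subtlety I anticipate lies in correctly translating the pointwise condition $\beta(v)_{s+1} \in F_v(\beta_s|_{N[v]})$, $0 \le s < t$, into a finite number of checks indexed by the transitions in $\Times(x)$: one must combine the plateau stability check with the transition check and, at the same time, handle the padded tail of $\Times(x)$ (where $t_i = t$) to avoid spurious failures at the end. Once this bookkeeping is set up, correctness is immediate from the very definition of $\Times(x)$ and $\States(x)$, because the plateau-plus-transition tests collectively exhaust every consecutive pair $(\beta_s, \beta_{s+1})$ of the decoded sequence.
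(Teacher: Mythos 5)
Your proposal is correct and follows essentially the same route as the paper: a syntactic validity check on $(\Times(x),\States(x))$, then verification of the two conditions of Definition~\ref{def:locallyvalid} for each $v\in U$ using Lemma~\ref{lem:succintSubsets} to extract restrictions and table lookups in $F_v$ and $\epsilon(\mathcal{E}_t(v))$, with the same time bound. Your explicit separation of the plateau-stability check from the instantaneous-transition check is a slightly more careful rendering of the paper's condition (1), but it is the same argument.
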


\begin{proof}
	
	Let $U$ be a set of vertices containing $S$ and let $x\in \{0, \dots, 2^N\}$, with $N= \#(N[U],t)$. Let $\kappa = |N[U]|$. Let $\{u_1, \dots, u_{\kappa}\}$ be the vertices of $N[U]$ ordered by label. The algorithm first verifies that $x \leq 2^N$ and rejects otherwise.  Then, the algorithm verifies that the pair $(\Times, \States)= (\Times(x), \States(x))$ satisfies $\Times_0 = 0$ and that $\Times$ and each column of $\States$ are increasing. Otherwise, the algorithm rejects because $x$ is not a succinct representation of a $(N[U], t)$-sequence. If the algorithm passes this test we assume that $x= \epsilon(\mathcal{S})$ for some $(N[U],t)$-sequence $\mathcal{S}$.  For a subset of vertices $Z$, let us call $(\Times[Z], \States[Z]) = \epsilon(\mathcal{S}|_Z)$.
	Consider now the following conditions:

	\begin{enumerate}
		\item $\States_{i,j} \in F_{u_j}( \States[N[u_j])]_i)$ for each $i \in \{0, \dots, |Q|^{\kappa}\}$ and $j \in \{1, \dots, \kappa \}$ such that $u_j \in U$.
		\item $(\Times[\{u_j\}], \States[\{u_j\}])$ belongs to $\mathcal{E}_t(u_j)$ for each $j \in \{1, \dots, \kappa \}$.
	\end{enumerate}
	
	When this conditions are satisfied, we can deduce that $x = \epsilon(\beta)$ for some partially-valid trace $\beta$ of $U$. Indeed, as $x$ is  representation of $\mathcal{S}$,  the vertices in $N[U]$ only have state-transitions of the time-steps given by the $\Times$.  Therefore,  condition (1.) and (2.) imply that $\mathcal{S}|_{N[u]}$ is a locally-valid trace of $u$. 
	To verify condition (1.) and (2.) we use the algorithm of Lemma \ref{lem:succintSubsets} to  compute $\States[Z]$ for a given set of vertices $Z\subseteq N[U]$. Observe that the algorithm computes $\Times[Z]$ and $\States[Z]$ in time $\cO(\kappa |Q|^\kappa \log n)$. 
	The algorithm checks (1.) by looking at each row of  $\States[N[u]]$ and the column corresponding to vertex $u$, and the table of $F_u$ given in the input.  The algorithm verifies (2.)  computing  $\epsilon(\mathcal{S}(u_j)) = (\Times[\{u_j\}], \States[\{u_j\}])$ and then looking at the  $\epsilon(\mathcal{S}(u_j))$-element of the table $\epsilon(\mathcal{E}_t(u_j))$. All these processes take time   $\cO(|U|\kappa |Q|^\kappa \log n)$.
	Overall the whole algorithm takes time $\cO(|U|\kappa |Q|^\kappa \log n) = f(|N[U]||Q|) \log n$.
\end{proof}

We are now ready to give our algorithm solving the \textsf{Specification Checking problem}.
\begin{theorem}\label{lem:algospeci}
	\textsf{Specification Checking problem} can be solved by an CREW PRAM algorithm running in time $\cO(\log^2 n)$ and using $n^{\mathcal{O}(1)}$ processors on graphs of bounded treewidth. 
\end{theorem}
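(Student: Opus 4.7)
The plan is to implement the dynamic programming scheme of Lemma~\ref{lem:dynprog} in parallel over a binary tree decomposition, using the succinct encoding of traces to keep every table of polynomial size and to make each individual update polylogarithmic.

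First I would preprocess the input. Using Proposition~\ref{prop:treedecompose} I obtain in time $\cO(\log^2 n)$ and $n^{\cO(1)}$ processors a rooted binary tree decomposition $(W,F,\{X_w\})$ of width $k'\leq 3\tw(G)+2$. Because $\Delta$ is bounded and $|X_w|\leq k'+1$ is bounded, the closed neighborhood $N[X_w]$ of every bag has bounded size $\kappa=\cO(1)$. Consequently $\#(N[X_w],t)=\cO(\log n)$ and the number of candidate succinct representations of partially-valid traces of $X_w$ is only $n^{\cO(1)}$. In parallel I then invoke Lemma~\ref{lem:algospecification} to compute $\epsilon(\mathcal{E}_t(u))$ for every $u\in V$; this takes $\cO(\log n)$ time and $n^{\cO(1)}$ processors overall since $|V|=n$ bags of constant size are handled independently.

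Next I build, for every $w\in W$, the set $PVT(X_w)$ through its succinct representations. I allocate one processor per pair $(w,x)$ with $x\in\{0,\dots,2^{\#(N[X_w],t)}\}$; there are $n\cdot n^{\cO(1)}=n^{\cO(1)}$ such pairs. Each processor calls the sequential test of Lemma~\ref{lem:PVTverif} (using the previously computed succinct $\mathcal{E}_t(u)$), which runs in time $f(\kappa|Q|)\log n=\cO(\log n)$. The output is, for every bag $w$, the list of valid succinct codes $\beta^w$. In the same phase I also precompute, for every parent–child pair $(w,v)$ in $F$ and every pair $(\beta^w,\beta^v)$, the Boolean "compatibility" value asserting that condition (1) of Lemma~\ref{lem:dynprog} holds; Lemma~\ref{lem:succintSubsets} lets each such check be carried out by one processor in time $\cO(\log n)$, and again the total processor count stays $n^{\cO(1)}$.

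It remains to evaluate the DP $\Sol_w(\beta^w)$ given by Lemma~\ref{lem:dynprog} and to return $\bigvee_{\beta^r}\Sol_r(\beta^r)$. The main obstacle is that the tree decomposition may have depth $\Theta(n)$, so a naive bottom-up sweep is not polylogarithmic. I would overcome this with parallel tree contraction (Miller--Reif), treating each bag $w$ as carrying a transition function from Boolean vectors indexed by $PVT(X_w)$ to Boolean vectors indexed by $PVT(X_{\mathrm{parent}(w)})$: the "accept" status at a parent's entry is an OR of ANDs involving the compatibility tables and the children's entries. Since each local function is a matrix over the Boolean semiring of polynomial dimension $n^{\cO(1)}$, the RAKE and COMPRESS operations compose such functions in time $\cO(\log n)$ using $n^{\cO(1)}$ processors, and $\cO(\log n)$ contraction rounds suffice to reduce the tree to its root. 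The total running time is therefore $\cO(\log^2 n)$ with $n^{\cO(1)}$ processors on a CREW PRAM; by Lemma~\ref{lem:dynprog} the root value is $\textbf{accept}$ exactly when $\mathcal{A}\models\mathcal{E}_t$. The conceptually delicate point will be to formulate the DP as an associative composition of polynomially sized functions so that tree contraction applies uniformly, which relies crucially on the bounds $\kappa=\cO(1)$ and $|PVT(X_w)|=n^{\cO(1)}$ granted by bounded degree and bounded treewidth.
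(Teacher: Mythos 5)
Your proposal is correct and shares almost all of its machinery with the paper's proof: the same preprocessing via Proposition~\ref{prop:treedecompose}, the same succinct encodings computed with Lemma~\ref{lem:algospecification}, the same per-bag enumeration of candidate codes tested with Lemma~\ref{lem:PVTverif}, and the same compatibility checks via Lemma~\ref{lem:succintSubsets}, all justified by Lemma~\ref{lem:dynprog}. Where you genuinely diverge is in how the tree DP is evaluated in polylogarithmic time. The paper does a level-by-level bottom-up sweep and gets $\cO(\log^2 n)$ by invoking the fact that the binary decomposition produced by Bodlaender's construction is \emph{balanced}, i.e.\ has depth $\cO(\log n)$ (a property the statement of Proposition~\ref{prop:treedecompose} does not make explicit but which the proof relies on). You instead refuse to assume logarithmic depth and evaluate the DP by Miller--Reif tree contraction, representing the partial solutions as Boolean matrices over $PVT(X_w)\times PVT(X_{\mathrm{parent}(w)})$ of polynomial dimension and composing them over the OR--AND semiring. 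This buys robustness --- your argument works for an arbitrary binary tree decomposition of bounded width, and it makes explicit the one fact the paper leaves implicit --- at the cost of having to verify the closure property you yourself flag as delicate: that when one child's subtree has been raked, the residual dependence of a two-child node on its remaining child is again a unary monotone map representable as a polynomial-size Boolean matrix, so that COMPRESS stays within a composable class. That verification is routine for this tree-automaton-style recurrence, and with it your argument yields the same $\cO(\log^2 n)$ time and $n^{\cO(1)}$ processor bounds.
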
 

\begin{proof}
	
	Our algorithm consists in an implementation of the dynamic programming scheme explained at the beginning of this section. Our algorithm starts computing a rooted binary-tree decomposition $(W, F, \{X_w: w\in W\}$ of the input graph using the logarithmic-space algorithm given by Proposition \ref{prop:treedecompose}. 
	The algorithm also computes the succinct representations of $\mathcal{E}$ and $\mathcal{I}_v$ for each $v\in V$ using Lemma~\ref{lem:algospecification}.
	

	Then, the algorithm preforms the dynamic programming scheme over $T$.  Let $r$ be the root of $T$. The for a bag $w\in W$, we define the \emph{level of $w$} denoted by $\textsc{Level}(w)$, as the distance between $w$ and the root $r$. There is a fast-parallel algorithm computing the level of each vertex of a tree by a EREW PRAM running in time $\cO(\log n)$ and using $\cO(n)$ processors \cite{jaja}. Using a prefix-sum algorithm we can compute the maximum level $M$ of a vertex, which correspond to the leafs of the binary-tree $T$. For each $i \in \{0, \dots, M\}$, let $\mathcal{L}_i$ the set of bags $w$ such that $\Level(w) = M-i$.

	For each $w\in W$, we represent the values of the function $\Sol_w$ as a table $S^w$ indexed as a table of size $2^N$, with $N= \cO(|Q|^{\Delta(3\tw(G)+2+k)}\log n)$ greater that $\#(N[X_w],t)$ for all $w \in W$. Each $x \in \{0, \dots, 2^N\}$ is interpreted as a potentially succinct encoding of a  partial-valid trace $\beta$. Initially $S^w = 0^{2^N}$, which meaning that a priori we reject all  $x \in \{0, \dots, 2^N\}$. Then, our algorithm iterates in a reverse order over the levels of the tree, starting from $\mathcal{L}_0$ until reaching the the root $r \in \mathcal{L}_M$. In the $i$-th iteration, we compute for each  bag $w\in \mathcal{L}_i$ the set of all  $x \in \{0, \dots, 2^N\}$ that represent partially-valid traces $\beta^w \in PVT(X_w)$ such that $\Sol_w(\beta^w) = \textbf{accept}$. To do so, the algorithm uses the calculations done on the bags in $\mathcal{L}_{i-1}$, and use Lemma \ref{lem:dynprog}. The algorithm saves the answer of each partial solution in a variable $\textsf{out}$ consisting in $|W|$ bits, such that, and the end of the algorithm $\textsf{out} = 1^{|W|}$ if and only the instance of the \textsf{Specification Checking problem} is accepted.
	
	At the first iteration, for each $w\in \mathcal{L}_0$ the algorithm sets in parallel $S^w_x = 1$ for all $x$ representing a partially-valid trace of $w$, because $\Sol_w(\beta^w)$ is defined to  $\textbf{accept}$ for all partially-valid trace of a leaf of $T$.  Therefore, in parallel for all bag $w \in  \mathcal{L}_0$, the algorithm runs $2^N$ parallel instances of the algorithm of Lemma \ref{lem:PVTverif}, one for each $x\in \{0, \dots, 2^N\}$, and for each one that is accepted, the algorithm writes $S^w_x = 1$. Once every parallel verification finishes, the algorithm sets $\textsf{out}_w=1$. We now detail the algorithm on the $i$-th iteration, assuming that we have computed $S^w$ for all bag $w\in  \mathcal{L}_{i-1}$.  
	
	Let $w$ be a vertex in $\mathcal{L}_i$ and let us call $w_L$ and $w_R$ the children of $w$, which belong to $\mathcal{L}_{i-1}$. Roughly, as we know the partial solutions restricted to the subtrees rooted at $w_1$ and $w_2$, the algorithm will try to extend it to a partial solution of $w$ according to the gluing procedure given by Lemma  \ref{lem:dynprog}, testing all possible combinations. More precisely, we initialize a set $|\mathcal{L}_i|$ processors $\{P^w\}_{w\in \mathcal{L}_i}$, one assigned each bag in $\mathcal{L}_i$. Each processor $P^w$ verifies if $\textsf{out}_{w_L} = \textsf{out}_{w_R} = 1$, or stops and writes $\textsf{out}_{w} = 0$. Otherwise, processor $P^w$ initializes a set of  $2^N$ processors, that we call $\{ P^w_{z}\}_{z  \in \{1, \dots, 2^N\}}$, and reserves $2^N$ bits of memory $S^w\in \{0,1\}^{2^N}$. For each $z\in 2^N$, processor $P^w_{z}$ verifies if $z$ is a succinct representation of a partially-valid trace of $X_w$ using Lemma~\ref{lem:PVTverif}. If its not the case then $P^w_{z}$ stops and writes a $0$ in $S^w_z$.  Otherwise, processor $P^w_{z}$ initializes $(2^N)^2$ processors  $\{ P^w_{z, z_R, z_L} : z_R, z_L  \in \{1, \dots, 2^N\} \}$ and reserves $2^N \times 2^N$ bits of memory $(\tilde{S}^w_{z})\in \{0,1\}^{N} \times  \{0,1\}^{N}$. 
	
	If $S^{w_L} _{z_L}=0$ or $S^{w_R} _{z_R}=0$ the processor $P^w_{z, z_R, z_L}$  stops and writes a $0$ in $\tilde{S}^w_{z,z_R,z_L}$. Otherwise, the processor $P^w_{z, z_R, z_L}$ interprets $z, z_R$ and $z_L$ as  $\epsilon(\beta^w_z)$, $\epsilon(\beta^w_{z_L})$ and $\epsilon(\beta^w_{z_R})$, for partially-valid traces $\beta_z$, $\beta_{w_L}$ and $\beta_{w_R}$ of $X_w$, $X_{w_L}$ and $X_{w_R}$, respectively.  Which means that $\beta^w_z$ belongs to $PVT(X_w)$ and $\Sol_{w_L}(\beta_{z_L}) = \Sol_{w_L}(\beta_{z_L}) = \textbf{accept}$. Therefore $\beta_z$ is a partially-valid trace of $X_w$ and $\beta_{z_L}$ and $\beta_{z_R}$ verify the condition (2) of Lemma~\ref{lem:dynprog}. Up to this point, all verifications can be done in time $\cO(N) = \cO(|Q|^{\Delta(3\tw(G) +2 +k)}\log n)$ because we are just looking at the coordinates in the given tables. 
	
	Then, the processor $P^w_{z, z_L, z_R}$ computes sets $Y_L = N[X_w] \cap N[X_{w_L}]$ and using the algorithm of Lemma~\ref{lem:succintSubsets} computes $\epsilon(\beta^w|_{Y_L})$ and $\epsilon(\beta^{w_L}|_{Y_L})$. If $\epsilon(\beta^w|_{Y_L}) = \epsilon(\beta^{w_L}|_{Y_L})$ the processor deduces that $\beta^w(u) = \beta^{w_L}(u)$,  for all $u\in N[X_w] \cap N[X_{w_L}]$. Then $P^w_{z, z_L, z_R}$ computes sets $Y_R = N[X_w] \cap N[X_{w_R}]$ and using the algorithm of Lemma~\ref{lem:succintSubsets} computes $\epsilon(\beta^w|_{Y_R})$ and $\epsilon(\beta^{w_R}|_{Y_R})$. Then, if $\epsilon(\beta^w|_{Y_R}) = \epsilon(\beta^{w_R}|_{Y_R})$ the processor deduces that $\beta^w(u) = \beta^{w_R}(u)$,  for all $u\in N[X_w] \cap N[X_{w_R}]$. If both verifications are satisfied, processor $P_{z, z_R, z_L}^w$ stops and writes a $1$ in $\tilde{S}^w_{z,z_R,z_L}$.  Otherwise, the processor $P_{z, z_R, z_L}^w$ stops and writes a $0$ in $\tilde{S}^w_{z,z_R,z_L}$. All of these verifications can be executed by $P^w_{z, z_L, z_R}$ in time $\cO(N)$.

	Once that all processors in $\{P_{z, z_1, z_2}^w : z_L, z_R \in \{1, \dots 2^N\}\}$ finished, processor $P^w_z$ runs a prefix-sum algorithm in $\tilde{S}^w_z$, simply summing the elements of the vector to verify if some instance was accepted. If the result is different than $0$, processor $P^w_z$ writes a $1$ in $S^w_z$, and writes a $0$ otherwise. 
	When every processor $(P^w_z)_{z\in \{1, \dots, 2^N\}}$ finishes, we obtain that $S^w$ is the table representing function $\Sol_w$. Then processor $P^w$ runs a prefix-sum algorithm on $S^w$ to verify that there exists a partial solution for bag $X_w$. If the result of the prefix sum equals zero, processor $P^w$ stops and writes a  $\textsf{out}_w = 0$. Otherwise, it writes $\textsf{out}_w = 1$ and stops. 
	
	After all processors $(P^w)_{w\in \mathcal{L}_i}$ have finished, the algorithm continues with the next level. When the last level is reached, before halting processor $P^r$ decides if $\textsf{out}_w = 1$ for all $w\in W$ using a prefix-sum algorithm. If the answer is affirmative the algorithm \textbf{accepts} the input, and otherwise \textbf{rejects}. On each level, the algorithm takes time $\cO(\Delta|Q|^{2\Delta(3\tw(G)+2)}\log n)$ and uses $n^{\cO(|Q|^{\Delta (3\tw(G)+2)})}$ processors. Proposition \ref{prop:treedecompose} provides a construction of a binary-tree-decomposition $T$ of depth $\cO(\log n)$. This means that $M = \cO(\log n)$, and implies that the whole takes time  $\cO(\Delta|Q|^{2\Delta(3\tw(G)+2)}\log^2 n) = \cO(\log^2 n)$ and $n^{\cO(|Q|^{\Delta (3\tw(G)+2)})} = n^{\cO(1)}$ processors. The correctness of the algorithm is given by Lemmas  \ref{lem:dynprog}, \ref{lem:algospecification}, \ref{lem:succintSubsets} and \ref{lem:PVTverif}.
\end{proof}

\begin{remark}\label{rem:esp}
	The algorithm given in the proof of Theorem \ref{lem:algospeci} not only computes the answer of \textsf{Specification Checking problem} but it also gives the coding of the orbits satisfying specification $\mathcal{E}_t$.
	\label{rem:computingtraces}
\end{remark}
\begin{remark}\label{rem:det}
	In the case in which the freezing automata network $\mathcal{A} = (G, \mathcal{F})$ is deterministic, we can say a lot more using latter algorithms. Giving $t$ and an initial condition $x \in Q^n$,  we are actually capable of testing any global dynamic property in  \textbf{NC} provided that this property has $F^t(x)$ as input and it is decidable in $\textbf{NC}$. In fact,  note that given an initial condition $x \in Q^n$,  there is only one possible orbit for each node $v \in V(G)$. Therefore, as a consequence of Remark \ref{rem:computingtraces} we are able to calculate the global evolution of the system in time $t$ starting from $x$. 
\end{remark}

The proof of previous Theorem \ref{lem:algospeci}  shows that \textbf{SPEC} can be solved in time $f(|Q|+\Delta(G)+ \tw(G))\log n$ using $n^{f(|Q|+\Delta(G)+ \tw(G))}$ processors in a PRAM machine, hence in time $n^{g(|Q|+\Delta(G)+ \tw(G))}$ on a sequential machine, for some computable functions $f$ and $g$. In other words, when the alphabet, the maximum degree and the tree-width of the input automata network are parameters, our result shows that \textbf{SPEC} is in XP. In the next section, we show that \textbf{SPEC} is not in FPT, unless FPT=W[2].

\subsection{Constraint Satisfaction Problem} 

We remark that problem $\textsc{SPEC}$ can be interpreted as a specific instance of the Constraint Satisfaction Problem (CSP). Problem CSP is a sort of generalization of SAT into a set of more versatile variable constraint. It is formally defined as a triple $(X, D, C)$, where $X = \{X_1, \dots, X_n\}$ is a set of \emph{variables}, $D = \{D_1, \dots, D_n\}$ is a set of \emph{domains} where are picked each variable, and a set $C = \{C_1, \dots, C_m\}$ of \emph{constraints}, which are $k$-ary relations of some set of $k$ variables. The question is whether exists a set of values of each variables in their corresponding domains, in order to satisfy each one of the constraints.  As we mentioned, $\textsc{SPEC}$ can be seen as a particular instance of CSP, where we choose one variable for each node of the input graph. The domain of each variable is the set of all locally valid traces of the corresponding node. Finally, we define one constraint for each node, where the variable involved are all the vertices in the close-neighborhood of the corresponding node, and the relation corresponds to the consistency in the information of the locally-valid traces involved. 

Now consider an instance of $\textsc{SPEC}$ with constant tree-width, maximum degree and size of the alphabet, and construct the instance of CSP with the reduction described in the previous paragraph. Then, the obtained instance of CSP has polynomially-bounded domains and constant tree-width, where the tree-width of a CSP instance is defined as the tree-width of the graph where each variable is a node, and two nodes are adjacent if the corresponding variables appear in some restriction. Interestingly, it is already known that in these conditions CSP can be solved in polynomial time  \cite{samer2010constraint,dalmau2002constraint,marx2007can}  . This implies that, subject to the given restrictions, \textsc{SPEC} is solvable in polynomial time using the given algorithm for CSP as a blackbox. 

The algorithm given in the proof of Theorem \ref{lem:algospeci} is better than the use of the CSP blackbox in two senses. First, we obtain explicit dependencies on the size of the alphabet, maximum degree and tree-width. Second, the \emph{Prediction Problem} is trivially solvable in polynomial time, and then the use of the CSP blackbox gives no new information for this problem. Moreover, as we mentioned in Remarks \ref{rem:esp} and \ref{rem:det}, our algorithm does not decides $\textsc{SPEC}$, but  also can be used to obtain a coding of the orbit satisfying the given specification, and moreover, the possibility to test any NC-property on deterministic freezing automata networks.  

\section{$W[2]$-hardness results}
\label{sec:w2}

The goal of this section is to show that, even when alphabet and degree are fixed and treewidth is considered as the only parameter, then the \textbf{SPEC} problem is $W[2]$-hard (see \cite{Downey_2013} for an introduction to the $W$ hierarchy) and thus not believed to be fixed parameter tractable. This is in contrast with classical results of Courcelle establishing that model-checking of MSO formulas parametrized by treewidth is fixed-parameter tractable \cite{Courcelle_1990} (see \cite{beyondmso,Downey_2013} to place the result in a wider context).

\begin{lemma}
	\label{lem:kdominating}
	There is a fixed alphabet $Q$ and an algorithm which, given ${k\in\N}$ and a graph $G$ of size $n$, produces in time ${O(k\cdot n^{O(1)})}$:
	\begin{itemize}
		\item a deterministic freezing automata network
		$\mathcal{A} = (G', \mathcal{F})$ with alphabet $Q$ and where $G'$ has treewidth ${O(k)}$ and degree $4$
		\item a $O(n^2)$-specification ${\mathcal{E}}$
	\end{itemize}
	such that $G$ admits a dominating set of size $k$ if and only if $\mathcal{A} \models \mathcal{E}$.
\end{lemma}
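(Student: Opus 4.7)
I plan to reduce from the $W[2]$-hard problem $k$-\textsc{Dominating-Set}. Given an instance $(G,k)$ with $n=|V(G)|$, the construction produces $G'$ as essentially the $k\times m$ grid with $m$ polynomial in $n$, possibly extended by a constant-width "aggregator" gadget on one side. This immediately gives $\tw(G')=O(k)$ and $\Delta(G')\leq 4$. Each row of the grid plays the role of a "slot" whose initial configuration encodes an index $c_i\in[n]$, namely the $i$-th candidate dominator.

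Using a fixed alphabet $Q$, the deterministic local rule $\mathcal{F}$ and the $O(n^2)$-specification $\mathcal{E}$ are designed so that an orbit of length $O(n^2)$ on $G'$ evaluates, starting from a "one seed per row" initial configuration, the Boolean formula
\[
\bigwedge_{j=1}^n\bigvee_{i=1}^k\bigl[\,v_{c_i}\in N[v_j]\,\bigr].
\]
The specification $\mathcal{E}$ encodes two things: (i) for every row $i$, exactly one cell is in the seed state at time $0$ (a purely local condition on the initial state of each cell, expressible in the per-node trace set); and (ii) the trace at a designated global output cell of the aggregator reaches an accept state before time $O(n^2)$. Dynamically, the seed in row $i$ triggers a horizontal wave which, combined with a companion vertical "clock" signal, enables each cell $(i,j)$ to identify $c_i$ from the relative timings of signal arrivals. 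The local rule at $(i,j)$, which may depend on $j$, hard-codes the neighborhood $N[v_j]$ and produces a "good" signal iff $v_{c_i}\in N[v_j]$. These per-cell bits are then aggregated by standard freezing monotone gadgets: a vertical OR inside each column and a horizontal AND across the columns in the aggregator.

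Correctness will be immediate from the bijection between satisfying orbits of $\mathcal{A}$ and tuples $(c_1,\ldots,c_k)$ whose associated vertex set dominates $G$. The construction is manifestly computable in time $O(k\cdot n^{O(1)})$, since each cell's local rule and specification can be prepared independently in polynomial time and the grid has $O(kn)$ cells. The main technical obstacle is realizing the three gadgets (row broadcast of $c_i$, adjacency test against $N[v_j]$, and column/global aggregation) with a \emph{fixed} alphabet and only degree-$4$ interactions: in particular, the synchronization between the horizontal seed wave and a vertical clock wave must let every cell recover enough temporal information with $O(1)$ internal states to perform its hard-coded $N[v_j]$ lookup, and the multi-phase design must avoid interfering with the subsequent OR/AND aggregation phases.
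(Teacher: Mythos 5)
Your high-level plan is the same as the paper's: an $O(k)\times\mathrm{poly}(n)$ grid of constant alphabet and degree $4$, the candidate dominating set chosen nondeterministically through the initial configuration (the existential quantifier in $\mathcal{A}\models\mathcal{E}$), the dynamics performing the verification, and the specification reading off success. However, two steps of your proposal are genuinely gapped. First, you claim that ``for every row $i$, exactly one cell is in the seed state at time $0$'' is a purely local condition expressible in the per-node trace sets. It is not: a $t$-specification constrains each node's trace independently, so it can only say ``this cell may or may not be seeded''; it cannot couple the $m$ cells of a row to forbid multiple seeds. Without that, an orbit seeding several cells in one row selects more than $k$ dominators and the ``only if'' direction of the reduction fails. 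This check must be done by the dynamics itself (the paper does it by forcing the regular pattern $>^\ast<^+$ within each block via a local forbidden-pattern test that raises a spreading error state).

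Second, and more centrally, your adjacency test does not work as described. A cell $(i,j)$ with $O(1)$ states that observes the arrival of a horizontal seed wave and a vertical clock can extract only $O(1)$ bits of information (essentially which signal arrived first), not the index $c_i\in[n]$; so it cannot perform a hard-coded lookup of whether $v_{c_i}\in N[v_j]$. You flag this yourself as ``the main technical obstacle,'' but it is precisely the content of the proof, and the timing-based mechanism you sketch cannot be realized with a fixed alphabet. The paper sidesteps temporal decoding entirely by making all information \emph{positional}: the grid has width $n^2$ viewed as $n$ blocks of size $n$; a dedicated graph row carries the adjacency matrix (block $j$ holds the incidence vector of $v_j$, forced by the specification); each selection row marks the \emph{same} offset $s$ in every block, with consistency across blocks verified by a constant-state signal-crossing distance comparison; and domination of $v_j$ is then a column-local coincidence test inside block $j$ between a graph-row mark and some selection-row mark. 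If you replace your timing mechanism with such a positional replication-plus-local-coincidence scheme (and add the dynamical single-seed check), your argument goes through; as written, the core gadget is missing.
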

The construction of the lemma works by producing a freezing automata network on a ${O(k)\times n^2}$-grid together with a specification which intuitively work as follows. A row of the grid is forced (by the specification) to contain the adjacency matrix of the graph, $k$ rows serve as selection of a subset of $k$ nodes of $G$, and another row is used to check domination of the candidate subset. The key of the construction is to use the dynamics of the network to test that the information in each row is encoded coherently as intended, and raise an error if not. The specification serves both as a partial initialization (graph adjacency matrix and tests launching are forced, but the choice in selection rows is free) and a check that no error are raised by the tests.
\begin{proof}
	Let ${G'=(V',E')}$ be the ${(k+2)\times n^2}$-grid where
	${V'=\{(i,j) : 0\leq i<n^2, 1\leq j\leq k+2\}}$ and
	\begin{align*}
	E' &= \{\{(i,j),(i\pm 1\bmod n^2,j)\} : 0\leq i<n^2, 1\leq j\leq k+2\}\\
	&\cup \{\{(i,j),(i,j')\} : 0\leq i<n^2, 1\leq j,j'\leq k+2, |j-j'|=1\}.
	\end{align*}
	Clearly $G'$ has a ${O(k)}$ treewidth.
	The horizontal dimension (coordinate $i$ in the grid) should be thought as $n$ block of size $n$.
	For each $j$ we denote the $j$th row by ${V_j = \{(i,j) : 0\leq i<n^2\}}$.
	The alphabet of the automata network is ${Q=\{0,1\}\times Q'}$ where the ${\{0,1\}}$ is the \emph{marker component} and $Q'$ is \emph{verification component}.
	A position is said to be marked if it has a $1$ in its marker component. 
	Vertically, the network is organized as follows.
	\begin{itemize}
		\item Rows $V_1$ to $V_k$ are called \emph{selection rows} and they
		all have the same behavior: marking the same unique position in
		each block, \textit{i.e.} having horizontal coordinates $s$,
		${s+n}$, ${s+2n}$, ${\ldots}$, ${s+(n-1)n}$ marked for some $s$
		with ${0\leq s < n}$. Intuitively the role of each selection row is to select a node among the $k$ nodes of the candidate dominating set and ensure that the selection information is coherently spread across the $n$ blocs.
		\item Row $V_{k+2}$ is the \emph{graph row} and its role is to hold the adjacency matrix of graph $G$ laid out in a single row (bloc $i$ contains the incidence vector of node $i$).
		\item Row $V_{k+1}$ is the \emph{domination row}. Its role is to witness that there is a position $i$ in each bloc (possibly different from one bloc to another) where $i$ is marked in the graph row and also marked in at least one selection row. Said differently its role is to give a certificate that the $k$ selected nodes in the selection rows are indeed a dominating set for the graph encoded in the graph row.
	\end{itemize}
	The description by rows above gives some conditions on the marked position in the network. It should be clear that these conditions are satisfied if and only if the $k$ selection rows represent $k$ nodes of graph $G$ that form a dominating set.
	
	We now complete the description of ${\mathcal{A}}$.
	In the verification component of states $Q'$ there is a special error state. The behavior of the automata network is to perform two tests to ensure that each row is holding marks that satisfies the conditions above. If any test fails somewhere the error state is raised and stays forever. The two test run in parallel (using two independent subcomponents inside $Q'$) and their technical implementation as a freezing automata network on graph $G'$ is straightforward using a constant number of state component within $Q'$. Note that in the description below, what we call signals are freezing signals: a state change from ${q}$ to ${q''}$ possibly with intermediate state $q'$ with ${q\leq q'\leq q''}$ that propagates in some direction like a flame in a wick (and not a particle in state $q_1$ that move inside a context of $q_0$ like classical signals are). The tests are as follows:
	\begin{itemize}
		\item the domination test works vertically: each marked position $i$ in the domination row checks that the position $i$ is also marked in the neighboring graph row and then launch a signal that moves from ${(i,k+1)}$ downto ${(i,1)}$ until it finds a marked position in some selection row. If the signal reaches position ${(i,1)}$ without having encountered any mark, then it raises an error state.
		\item the selection test happens in each selection row horizontally: first it checks that exactly one position is marked within each bloc (this can be done in one step by using a layer of alphabet ${\{>,<\}}$ and forcing the language ${>^\ast<^+}$ by forbidding the pattern ${<>}$ to appear in two adjacent position in a bloc); second, each marked position (in each bloc) launches a signal going left and a signal going right both moving at the speed of one position per time step. Each signal goes one, crosses a first signal going in the opposite direction, continues, and when it encounters a signal of the opposite direction for the second time it stops and checks that the position reach contains a mark. This process ensures that the position marked is the same in each bloc by comparing the distance between marks in bloc ${p-1}$ and bloc ${p+1}$ and the distance between marks in blocs ${p}$ and ${p+1}$ for all $p$ (see Figure~\ref{fig:markdist}).
	\end{itemize}
	\begin{figure}
          \centering
          \begin{tikzpicture}[scale=.4]
            \draw[->] (-.5,0) -- node[sloped,above,midway] {time} +(0,5);
            \fill[color=gray!10!white] (1,0)--(0,1)--(0,5)--(3,2)--cycle;
            \fill[color=gray!50!white] (0,5)--(3,2)--(5,4)--(7,2)--(10,5)--cycle;
            \fill[color=gray!10!white] (3,2)--(5,0)--(7,2)--(5,4)--cycle;
            \fill[color=gray!10!white] (9,0)--(7,2)--(10,5)--(12,5)--(12,3)--cycle;
            \draw (5,4) node[circle,fill=green,minimum size=1mm,inner sep=2pt] { };
            \draw[thick,green] (5,4)-- +(0,1);
            \draw (5.5,4) node[right] {\tiny ok};
            \draw (0,0)--(12,0);
            \draw (0,-.2) -- (0,.2);
            \draw (2,0) node[below] {\tiny bloc $p-1$};
            \draw (4,-.2) -- (4,.2);
            \draw (6,0) node[below] {\tiny bloc $p$};
            \draw (8,-.2) -- (8,.2);
            \draw (10,0) node[below] {\tiny bloc $p+1$};
            \draw (12,-.2) -- (12,.2);
            \draw[dotted] (1,0) -- +(0,5);
            \draw[dotted] (5,0) -- +(0,5);
            \draw[dotted] (9,0) -- +(0,5);
          \end{tikzpicture}\hskip 1cm
          \begin{tikzpicture}[scale=.4]
            \fill[color=gray!10!white] (2,0)--(0,2)--(0,5)--(3.5,1.5)--cycle;
            \fill[color=gray!50!white] (0,5)--(3.5,1.5)--(5.5,3.5)--(7,2)--(10,5)--cycle;
            \fill[color=gray!10!white] (3.5,1.5)--(5,0)--(7,2)--(5.5,3.5)--cycle;
            \fill[color=gray!10!white] (9,0)--(7,2)--(10,5)--(12,5)--(12,3)--cycle;
            \draw (5.5,3.5) node[circle,fill=red,minimum size=1mm,inner sep=2pt] { };
            \draw[thick,red] (5.5,3.5)-- +(0,1.5);
            \draw (6,3.5) node[right] {\tiny error};
            \draw (0,0)--(12,0);
            \draw (0,-.2) -- (0,.2);
            \draw (2,0) node[below] {\tiny bloc $p-1$};
            \draw (4,-.2) -- (4,.2);
            \draw (6,0) node[below] {\tiny bloc $p$};
            \draw (8,-.2) -- (8,.2);
            \draw (10,0) node[below] {\tiny bloc $p+1$};
            \draw (12,-.2) -- (12,.2);
            \draw[dotted] (2,0) -- +(0,5);
            \draw[dotted] (5,0) -- +(0,5);
            \draw[dotted] (9,0) -- +(0,5);
          \end{tikzpicture}
          \caption{Checking that the same node is marked in each bloc in a selection row: on the left, a valid test in bloc $p$, on the right an invalid test in bloc $p$ generating an error state. Dotted lines indicate the marked position in each bloc. The shades of gray indicates the state changes involved in the implementation of freezing signals: at each position the sequence of states in non-decreasing with time.}
          \label{fig:markdist}
	\end{figure}
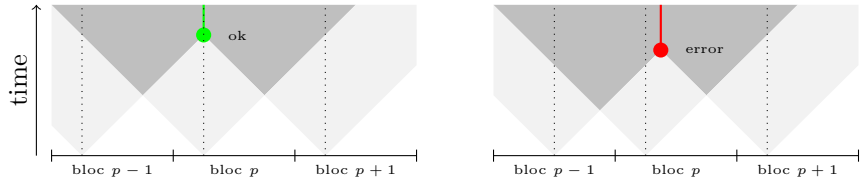
	
	Finally, the ${n}$-specification ${\mathcal{E}}$ consists in:
	\begin{itemize}
		\item forcing the initial marking of the graph row to be the actual adjacency matrix of $G$;
		\item allowing any marking in the other rows;
		\item forcing the $Q'$ component to be without error at any time step;
		\item initializing the $Q'$ component to properly launch the tests.
	\end{itemize}
	
	It should be clear that both automata network ${\mathcal{A}}$ and specification $\mathcal{E}$ can be constructed in time ${O(k\cdot n^{O(1)})}$ from $G$. The construction is such that $G$ admits a dominating set of size $k$ if and only if $\mathcal{A} \models \mathcal{E}$: indeed, from the initialization imposed by ${\mathcal{E}}$ it would take at most $n$ steps for any of the two test to raise an error state, so ${\mathcal{E}}$ ensures that there exists an initial marking that encodes a valid dominating set of size (at most) $k$ as explained above.
\end{proof}

From Lemma~\ref{lem:kdominating} and $W[2]$-hardness of the $k$-\textsf{Dominating-Set} problem \cite{Downey_1995}, we immediately get the following corollary.

\begin{corollary}\label{cor:w2fixedalphabet}
  The \textbf{SPEC} problem with fixed degree and fixed alphabet and with treewidth as unique parameter is ${W[2]}$-hard.
\end{corollary}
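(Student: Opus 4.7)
The plan is to obtain the $W[2]$-hardness of \textbf{SPEC} (with fixed alphabet, fixed maximum degree, and treewidth as unique parameter) as an essentially immediate consequence of Lemma~\ref{lem:kdominating} combined with the known $W[2]$-hardness of $k$-\textsf{Dominating-Set} under parameter $k$ \cite{Downey_1995}. The only thing to verify is that the construction of Lemma~\ref{lem:kdominating} is in fact an FPT-reduction when the source parameter is $k$ (the size of the dominating set) and the target parameter is the treewidth of the freezing automata network's graph.

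More precisely, given an instance $(G, k)$ of $k$-\textsf{Dominating-Set} with $|V(G)| = n$, I apply Lemma~\ref{lem:kdominating} to obtain in time $O(k \cdot n^{O(1)})$ a freezing automata network $\mathcal{A} = (G', \mathcal{F})$ on a fixed alphabet $Q$, with $\Delta(G') \leq 4$ and $\tw(G') = O(k)$, together with an $O(n^2)$-specification $\mathcal{E}$, such that $G$ admits a dominating set of size $k$ if and only if $\mathcal{A} \models \mathcal{E}$. This is precisely the instance of \textbf{SPEC} of the instance we need.

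To conclude, it remains to check the formal requirements of an FPT-reduction in the sense recalled in the preliminaries. The mapping $(G,k) \mapsto ((\mathcal{A}, \mathcal{E}), \tw(G'))$ is computable in time $s(k) \cdot n^{O(1)}$ for a computable function $s$ (here linear in $k$), and the new parameter $\tw(G') = O(k)$ depends only on $k$, so one can set $r(k) = c\cdot k$ for a suitable constant $c$ absorbing the $O(\cdot)$. Both alphabet $Q$ and degree bound $4$ are absolute constants, independent of the instance, so they can be fixed once and for all in the parameterization of \textbf{SPEC}. Thus the reduction witnesses $k\text{-}\textsf{Dominating-Set} \leq_{\mathrm{FPT}} \textbf{SPEC}$ with treewidth as sole parameter, and $W[2]$-hardness transfers.

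I do not foresee any serious obstacle: the content is already in Lemma~\ref{lem:kdominating}, and the corollary really is a one-line bookkeeping step. The only subtlety worth emphasizing is that the dependence $\tw(G') = O(k)$ rather than, say, $\tw(G') = O(f(k))$ for some fast-growing $f$, is what allows the target parameter to be legitimately taken as the treewidth; had the construction produced a graph whose treewidth grew with $n$, no such $W[2]$-hardness statement would follow. Here the grid used has height $k+2$, so the linear-in-$k$ treewidth is immediate, and the corollary follows.
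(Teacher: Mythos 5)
Your proposal is correct and matches the paper's argument exactly: the paper derives the corollary immediately from Lemma~\ref{lem:kdominating} and the $W[2]$-hardness of $k$-\textsf{Dominating-Set}, and your verification of the FPT-reduction bookkeeping (fixed alphabet, degree $4$, $\tw(G')=O(k)$, construction time $O(k\cdot n^{O(1)})$) is precisely the implicit content of that one-line deduction.
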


A freezing automata network on a ${O(k)\times n^2}$-grid with alphabet $Q$ can be seen as a freezing automata network on a line of length ${n^2}$ with alphabet ${Q^{O(k)}}$.
One might therefore want to adapt the above result to show $W[2]$-hardness in the case where treewidth and degree are fixed while alphabet is the parameter. However, the specification which is part of the input, has an exponential dependence on the alphabet (a $t$-specification is of size ${O(n\cdot t^{|Q|})}$). Therefore FPT reductions are not possible when the alphabet is the parameter. We can circumvent this problem by considering a new variant of the \textbf{SPEC} problem where specification are given in a more succinct way through regular expressions. A regular ${(Q,V)}$-specification is a map from $V$ to regular expressions over alphabet $Q$. We therefore consider the problem \textbf{REGSPEC} which is the same as \textbf{SPEC} except that the specification must be a regular specification. With this modified settings, the construction of Lemma~\ref{lem:kdominating} can be adapted to deal with the alphabet as parameter.

\begin{corollary}\label{cor:w2fixedtreewidth}
  The \textbf{REGSPEC} problem with fixed degree and fixed treewidth and with alphabet as unique parameter is ${W[2]}$-hard.
\end{corollary}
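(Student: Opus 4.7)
The plan is to reduce $k$-\textsf{Dominating-Set} to \textbf{REGSPEC} with fixed treewidth and degree, by folding the grid-based construction of Lemma~\ref{lem:kdominating} along its vertical direction. Starting from a graph $G$, I first apply the construction of Lemma~\ref{lem:kdominating} to obtain the grid-based automata network $\mathcal{A}=(G',\mathcal{F})$ on the $(k+2)\times n^2$ grid with a fixed alphabet $Q$ and a $O(n^2)$-specification~$\mathcal{E}$. Then, I define a new graph $G''$ consisting of a cycle of length $n^2$, where node $i$ in $G''$ represents the whole column $\{(i,1),\ldots,(i,k+2)\}$ of $G'$. The new alphabet is $Q''=Q^{k+2}$, of size $|Q|^{O(k)}$, which depends only on the parameter $k$. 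Each local rule on $G''$ simulates in one step the local rules of the $k+2$ original nodes of the corresponding column, which is possible because all information needed to update a column is contained in the column itself and its two horizontal neighbors. This yields a freezing automata network on $G''$ with treewidth $2$ and degree $2$, both fixed.

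The main obstacle is to encode the specification $\mathcal{E}$ into a regular specification $\mathcal{E}''$ over alphabet $Q''$ whose total description has size polynomial in $n$ and FPT in the new parameter $|Q''|$. The constraints in $\mathcal{E}$ are of three types: (i) the initial state of a designated row (now a designated component inside each super-state) is forced to carry a specific bit of the adjacency matrix of $G$; (ii) the error symbol of the verification component $Q'$ is forbidden at every time step; (iii) the initial verification sub-components are initialized to launch the two tests. Each of (i), (ii), (iii) can be expressed at each node of $G''$ by a regular expression of size polynomial in $|Q''|$: constraints of the form ``first symbol belongs to $S\subseteq Q''$'' translate to $S\cdot (Q'')^{\ast}$, and the no-error constraint translates to $(Q''\setminus E)^{\ast}$ where $E$ is the finite set of super-states containing an error in some component. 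The intersection of a constant number of such regular languages admits a regular expression of size polynomial in $|Q''|$, so the regular specification $\mathcal{E}''$ can be built in time $n^{O(1)}\cdot 2^{O(k)}$, which is FPT in the parameter $|Q''|$ of the target instance.

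Putting the pieces together, the reduction runs in time $f(k)\cdot n^{O(1)}$ and maps $G$ to an instance of \textbf{REGSPEC} with fixed treewidth and degree whose only parameter is the alphabet size $|Q''|=|Q|^{O(k)}$, which is itself a computable function of $k$. By construction, $\mathcal{A}''\models \mathcal{E}''$ if and only if $\mathcal{A}\models \mathcal{E}$, if and only if $G$ admits a dominating set of size $k$. The $W[2]$-hardness of $k$-\textsf{Dominating-Set} \cite{Downey_1995} then transfers to \textbf{REGSPEC} under the stated parametrization, giving the corollary. The key subtlety I expect to be most delicate is verifying that every constraint imposed by the original specification (in particular the mildly temporal conditions involved in initializing and monitoring the verification components) can indeed be captured by short regular expressions rather than explicit lists of sequences, since an explicit listing would incur the same exponential blow-up in $|Q''|$ that motivated the passage from \textbf{SPEC} to \textbf{REGSPEC} in the first place.
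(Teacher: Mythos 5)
Your proposal is correct and follows essentially the same route as the paper: compress the $k+2$ rows of the grid from Lemma~\ref{lem:kdominating} into a single cycle of length $n^2$ over the product alphabet $Q^{k+2}$, and observe that the specification is already a regular one of the form ``initialization prefix followed by a star over the error-free states'', so that it admits a description of size $O(|Q|^{O(k)}\cdot n^2)$. The only cosmetic difference is that you phrase the last step as intersecting a constant number of regular languages (which is fine here because all the constraints share the shape $S_1S_2^\ast$, so the intersection does not blow up), whereas the paper writes the product expression $\bigl(Q_{i,v_1}\times\cdots\times Q_{i,v_{k+2}}\bigr)\bigl(Q_{e,v_1}\times\cdots\times Q_{e,v_{k+2}}\bigr)^\ast$ directly.
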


\begin{proof}
  Using the construction of Lemma~\ref{lem:kdominating} and compressing the $k$ rows into a single one by enlarging the alphabet, we can construct a freezing automata network $\mathcal{A}^\alpha$ of alphabet ${Q^{k+2}}$ on a graph ${G''=(V'',E'')}$ which is a cycle of length $n^2$ (therefore of constant treewidth and constant degree) and that has the same behaviour with respect to $k$-dominating sets of the graph $G$ of the lemma. The local map at each node of $\mathcal{A}^\alpha$ can be described by a transition table of size ${O(Q^{3k})}$ so the global description is of size ${O(Q^{3k}\cdot n^2)}$. Noting that the specification produced in Lemma~\ref{lem:kdominating} is actually a regular specification of the form: ${v\in V'\mapsto Q_{i,v}Q_{e,v}^\ast}$ where $Q_{i,v}$ take care of the initialization and $Q_{e,v}$ is the subset of states with no error. We deduce that the corresponding regular specification for ${\mathcal{A}^\alpha}$ is of the form ${v\in V''\mapsto \bigl(Q_{i,v_1}\times\cdots\times Q_{i,v_{k+2}}\bigr)\bigl(Q_{e,v_1}\times\cdots\times Q_{e,v_{k+2}}\bigr)^\ast}$. Hence its size is ${O(|Q|^{O(k)}\cdot n^2)}$. The total size of the input produced for problem \textbf{REGSPEC} is therefore also ${O(|Q|^{O(k)}\cdot n^2)}$ and it can be produced in time ${O(|Q|^{O(k)}\cdot n^{O(1)})}$. This proves that the $k$-dominating set problem can be FPT reduced to \textbf{REGSPEC} with alphabet as parameter.
\end{proof}

\section{Hardness results for polynomial treewidth networks}

We say a family of graphs ${\mathcal{G}}$ has polynomial treewidth if the graphs of the family are of size at most polynomial in their treewidth, precisely: if there is a non-constant polynomial map $p_{\mathcal{G}}$ (with rational exponents in $(0,1)$) such that for any ${G=(V,E)\in\mathcal{G}}$ it holds ${\tw(G)\geq p_{\mathcal{G}}(|V|)}$. Moreover, we say the family is \emph{constructible} if there is a polynomial time algorithm that given $n$ produces a connex graph ${G_n\in\mathcal{G}}$ with $n$ nodes.
The following lemma is based on a polynomial time algorithm to find large perfect brambles in graphs \cite{KreutzerT10}. This structure allows to embed any digraph in an input graph with sufficiently large treewidth via path routing while controlling the maximum number of intersections per node of the set of paths.

\begin{lemma}[Subgraph routing lemma]
  \label{lem:routing}
  For any family $\mathcal{G}$ of graphs with polynomial treewidth, there is a polynomial map $p$ and a deterministic polynomial time algorithm that, given any graph ${G=(V,E)\in\mathcal{G}}$ and any digraph ${D=(V',E')}$ of maximum (in/out) degree $\Delta$ and size at most ${p(|V|)}$, outputs:
  \begin{itemize}
  \item a mapping ${\mu: V'\rightarrow V}$ such that, for each ${v\in V}$, ${\mu^{-1}(v)}$ contains at most two elements,
  \item a collection ${\mathcal{C}=(p_{e'})_{e'\in E'}}$ of paths connecting ${\mu(v'_1)}$ to ${\mu(v'_2)}$ for each ${(v'_1,v'_2)\in E'}$, and such that any node in $V$ belongs to at most $4\Delta$ paths from $\mathcal{C}$.
  \end{itemize}
\end{lemma}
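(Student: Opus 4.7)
The plan is to obtain the embedding by first finding a grid-like structure inside $G$ of polynomial size, and then routing $D$ inside this structure using a classical VLSI-style layout. The starting point is the algorithm of Kreutzer and Tazari cited in the excerpt, which for any input graph $G$ produces in polynomial time a perfect bramble of order polynomial in $\tw(G)$. Since the family $\mathcal{G}$ has polynomial treewidth, i.e.\ ${\tw(G)\geq p_{\mathcal{G}}(|V|)}$ for some non-constant polynomial, we obtain a family of pairwise-touching, constant-multiplicity connected subgraphs ${B_1,\ldots,B_m\subseteq V}$ of $G$ with ${m=|V|^{\Omega(1)}}$. From this perfect bramble, standard arguments (the polynomial grid-minor theorem) yield a ${N\times N}$ grid minor of $G$, namely a family ${(M_{i,j})_{1\leq i,j\leq N}}$ of pairwise disjoint connected subgraphs with an adjacency edge between any two ${M_{i,j}}$ and ${M_{i\pm1, j}}$ or ${M_{i, j\pm1}}$, where ${N=|V|^{\Omega(1)}}$. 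Fix the polynomial $p$ so that ${p(|V|)\leq N/(c\Delta)}$ for a small constant $c$.

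Next, I would embed $D$ into the grid ${\{1,\ldots,N\}^2}$ by a book-style layout. Place the vertices of $V'$ on a central row, at positions ${(c_{v'}, N/2)}$ for pairwise distinct columns ${c_{v'}}$ spaced by ${\Theta(\Delta)}$. For each edge ${(v'_1,v'_2)\in E'}$, assign a distinct horizontal \emph{track} row above or below the central row (from a pool of $2\Delta$ tracks reserved around each endpoint, determined by the local ordering of edges at ${v'_1}$ and ${v'_2}$) and route the edge as an L-shaped monotone path: vertical segment from ${\mu(v'_1)}$ to the track row, horizontal segment along the track to column ${c_{v'_2}}$, then vertical segment to ${\mu(v'_2)}$. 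A straightforward counting argument shows each grid cell ${(i,j)}$ is then on at most ${4\Delta}$ such paths: each of the $2\Delta$ tracks crossing its row contributes one horizontal passage, and each of the $2\Delta$ edges incident to the closest column's vertex contributes at most one vertical passage.

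Then I would lift the grid-embedded routing back to $G$ through the branch sets. For each ${v'\in V'}$ pick ${\mu(v')}$ to be any vertex of ${M_{c_{v'},N/2}}$; the disjointness of the branch sets gives ${|\mu^{-1}(v)|\leq 1}$, which is within the $2$-bound allowed by the statement. For each grid-path ${p=((i_0,j_0),\ldots,(i_\ell,j_\ell))}$, build the corresponding walk in $G$ by concatenating for each step a chosen edge between ${M_{i_k,j_k}}$ and ${M_{i_{k+1},j_{k+1}}}$ together with an internal path within each ${M_{i_k,j_k}}$ connecting the entry point to the exit point, using a fixed spanning tree of ${M_{i_k,j_k}}$ rooted at a designated central vertex. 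Take a simple subpath to obtain ${p_{e'}}$. Since each vertex ${v\in V}$ belongs to at most one ${M_{i,j}}$, and the branch set ${M_{i,j}}$ is crossed by at most ${4\Delta}$ grid-paths, the vertex $v$ is contained in at most ${4\Delta}$ paths of the resulting collection ${\mathcal{C}}$.

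The main delicate point is guaranteeing that the internal routing inside each branch set does not inflate the per-vertex count beyond ${4\Delta}$. Using spanning trees with an arbitrary orientation can in principle cause several paths passing through the same branch set to share internal tree edges, but since the upper bound on the number of distinct grid-paths crossing ${M_{i,j}}$ is already ${4\Delta}$, the number of simple lifted paths through any single vertex of ${M_{i,j}}$ cannot exceed this same value. The perfect-bramble machinery ensures each of these steps is constructive and runs in deterministic polynomial time in ${|V|}$.
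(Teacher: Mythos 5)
Your construction diverges from the paper's at the very first step after obtaining the perfect bramble, and the divergence introduces a genuine gap. You pass from the perfect bramble to an $N\times N$ grid minor with $N=|V|^{\Omega(1)}$ by ``standard arguments (the polynomial grid-minor theorem)''. This is not a standard argument: extracting a grid minor of size polynomial in the order of a bramble is essentially the full content of the polynomial grid-minor theorem of Chekuri and Chuzhoy, and the lemma you are proving explicitly demands a \emph{deterministic} polynomial-time algorithm. The known algorithmic versions of that theorem are heavy machinery (the original construction is randomized, and derandomizing it is itself nontrivial), so this step is the load-bearing part of your argument and cannot be waved through; at minimum it needs a precise citation to a deterministic polynomial-time grid-minor extraction. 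The remainder of your proof --- the book-style layout, the per-cell congestion count of at most $2\Delta+1\leq 4\Delta$ (one horizontal passage since tracks are distinct, plus at most $2\Delta$ vertical passages at a vertex column), and the lift through pairwise disjoint branch sets --- is sound, modulo checking that the lifted walks really start and end at the designated vertices $\mu(v')$.

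The paper's proof shows the grid detour is unnecessary: the perfect bramble already has everything needed for the routing. Reindex the bramble elements as $(B_{v'})_{v'\in V'}$, let $\mu(v')$ be any vertex of $B_{v'}$ (the multiplicity-two property of perfect brambles gives $|\mu^{-1}(v)|\leq 2$ directly), and for each edge $(v'_1,v'_2)\in E'$ pick some $v\in B_{v'_1}\cap B_{v'_2}$ --- nonempty by the pairwise-intersection property --- and concatenate a path from $\mu(v'_1)$ to $v$ inside the connected set $B_{v'_1}$ with a path from $v$ to $\mu(v'_2)$ inside $B_{v'_2}$. Any path of $\mathcal{C}$ passing through a vertex $v$ is contained in $B_{v'_1}\cup B_{v'_2}$ for the at most two bramble elements containing $v$, hence has an endpoint at $\mu(v'_1)$ or $\mu(v'_2)$, and each such endpoint accounts for at most $2\Delta$ paths; this yields the $4\Delta$ bound with no layout argument and no appeal to grid minors. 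Replacing your grid-minor step by this direct routing turns your proof into the paper's and removes the gap.
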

\begin{proof}
	By \cite[Theorem 5.3]{KreutzerT10} there exists a polynomial map $p_1$ and a polynomial time algorithm that given a graph ${G=(V,E)\in\mathcal{G}}$ finds a \emph{perfect bramble} ${\mathcal{B}=(B_1,\ldots, B_k)}$ with ${k\geq p_1(p_{\mathcal{G}}(|V|))}$, \textit{i.e.} a list of connected subgraphs ${B_i\subseteq V}$ such that:
	\begin{enumerate}
		\item ${B_i\cap B_j\not=\emptyset}$ for all $i$ and $j$,
		\item for all ${v\in V}$ there are at most two elements of $\mathcal{B}$ that contain $v$.
	\end{enumerate}
	We set the polynomial map of the lemma to be $p=p_1\circ p_{\mathcal{G}}$ and consider any digraph ${D=(V',E')}$ of maximum (in/out) degree $\Delta$ and size at most ${p(|V|)}$. We suppose ${k=|V'|}$ (by forgetting some elements of $\mathcal{B}$) and reindex the element of $\mathcal{B}$ by $V'$. The map ${\mu: V'\rightarrow V}$ is constructed by picking some element ${\mu(v')\in B_{v'}}$ for all ${v'\in V'}$. The fact that any vertex $v\in V$ is contained in at most two elements of the bramble $\mathcal{B}$ ensures the first condition of the lemma on $\mu$. Now, for each ${(v'_1,v'_2)\in E'}$ we define a path from ${\mu(v'_1)}$ to ${\mu(v'_2)}$ as follows: let ${v\in B_{v'_1}\cap B_{v'_2}}$ (first property of perfect brambles) then choose a path from $\mu(v'_1)$ to $v$ inside $B_{v'_1}$ (which is connected) followed by a path from $v$ to $\mu(v'_2)$ inside $B_{v'_2}$. The collection of paths $\mathcal{C}$ thus defined is such that there are at most $2\Delta$ paths that start or end in ${\mu(v')}$ for any ${v'\in E}$. Moreover, for any ${v\in V}$, there are at most two elements of ${\mathcal{B}}$ that contain $v$, let's say ${B_{v'_1}}$ and ${B_{v_2'}}$. Then the only paths from $\mathcal{C}$ that can go through $v$ are those starting or ending at either $\mu(v_1')$ or $\mu(v_2')$, so they are at most $4\Delta$ in total.
\end{proof}

\begin{theorem}
  \label{thm:nilpotency}
  For any family ${\mathcal{G}}$ of constructible graphs of polynomial treewidth, the problem \emph{nilpotency} is coNP-complete.
\end{theorem}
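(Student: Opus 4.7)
The plan is to establish both membership in coNP and coNP-hardness.

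For membership, I would exploit the fact that freezing dynamics stabilize quickly. Since the state of any node is non-decreasing along the partial order of $Q$, any orbit reaches a fixed point after at most $(|Q|-1)n$ steps. Therefore a succinct witness of \emph{non}-nilpotency is a pair of initial configurations $(c_1,c_2)\in Q^V\times Q^V$ whose fixed points $F^{t^\ast}(c_1)$ and $F^{t^\ast}(c_2)$ differ, where $t^\ast=(|Q|-1)n$. Such a witness has polynomial size and can be verified by simulating $F$ for $t^\ast$ steps from both $c_1$ and $c_2$ and comparing the two fixed points, which is doable in polynomial time.

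For coNP-hardness, I would reduce from \textsc{UNSAT} using Lemma~\ref{lem:routing}, in the same spirit as Remark~\ref{rem:nilpotencyontrees} but on the constrained graph family $\mathcal{G}$. Given a CNF formula $\phi$ with $n$ variables and $m$ clauses, first compile $\phi$ into a bounded fan-in Boolean circuit $C_\phi$ of polynomial size, viewed as a DAG $D_\phi$ of constant maximum in/out degree $\Delta$. Using constructibility of $\mathcal{G}$, choose a graph $G\in\mathcal{G}$ of size $N$ polynomial enough in $|\phi|$ so that, by the polynomial treewidth assumption, Lemma~\ref{lem:routing} applies and produces a vertex mapping $\mu:V(D_\phi)\to V(G)$ together with a collection $\mathcal{C}$ of paths realizing the edges of $D_\phi$, where each node of $G$ hosts at most two DAG-nodes and belongs to at most $4\Delta$ paths.

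Next I would design a deterministic freezing automata network $\mathcal{A}$ on $G$ with a constant-size alphabet $Q$ built from: a ``background'' layer, one ``wire'' sub-component per path slot (at most $4\Delta$ slots per node, so $|Q|=O(1)$), a ``gate'' sub-component at DAG-nodes, and a global \emph{error} state that is absorbing and contagious to all neighbors. The intended dynamics is: at a wire slot of $\mu(x)$ the initial bit represents the truth value to be sent to a successor gate; this bit travels as a freezing signal along its path in $\mathcal{C}$; at a gate node $\mu(g)$ all incoming signals are combined according to the gate type and the output bit is emitted on the outgoing wire slots. In parallel, consistency checks are run: each wire slot fires an error if it receives two contradictory signals or if its path endpoints disagree on the bit being transmitted; each gate fires an error if the arriving bits are inconsistent with the gate's operation; each input slot of $C_\phi$ fires an error if the two occurrences of the same variable (at different input gates) disagree; and the output gate of $C_\phi$ fires an error precisely when it evaluates to $0$ (i.e.\ $\phi$ not satisfied). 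All these local checks are implementable as freezing signals since error, once raised, only spreads; the final freezing state of a node is either its normal ``computed'' state or the error state.

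The correctness is then: if $\phi$ is satisfiable, any initial configuration encoding a satisfying assignment on the input slots (with all other sub-components set to their neutral start values) reaches a fixed point in which error is absent at some node, whereas the all-error configuration is itself a fixed point distinct from it, showing $\mathcal{A}$ is not nilpotent; conversely, if $\phi$ is unsatisfiable, every initial configuration either immediately injects an inconsistency and triggers error globally, or encodes a consistent assignment whose output gate then produces error, and in both cases error fills $G$, so $\mathcal{A}$ is nilpotent with unique fixed point equal to the all-error configuration. The construction runs in polynomial time since the graph $G$, the mapping $\mu$ and the paths of $\mathcal{C}$ are all obtained in polynomial time. I expect the main obstacle to be the bookkeeping at shared vertices: keeping the $4\Delta$ wire slots independent and making the gate-firing conditions robust under freezing so that an adversarial initial configuration on the ``internal'' states of wires or gates cannot forge a phantom satisfying computation—this is what the consistency-check signals described above are designed to prevent, and their correct implementation as freezing signals is the delicate technical point of the reduction.
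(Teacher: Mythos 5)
Your coNP-membership argument is correct and matches the paper's: by the freezing property (or Lemma~\ref{lem:trace-length}) every orbit stabilizes within polynomially many steps, so a pair of configurations with distinct fixed points is a polynomial witness of non-nilpotency. Your hardness reduction also starts the same way as the paper's (embed the circuit DAG into $G\in\mathcal{G}$ via Lemma~\ref{lem:routing}, give each node $O(\Delta)$ wire/gate components plus a contagious absorbing error state). But the core of your construction diverges, and there is a genuine gap in it.

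You simulate the circuit \emph{dynamically}: bits travel as freezing signals along the paths, gates wait for their inputs and then emit outputs, and consistency checks raise errors. For nilpotency you must show that in the UNSAT case the all-error configuration is the \emph{unique} fixed point reached from \emph{every} initial configuration. Your dynamic design does not ensure this: any configuration in which the computation is simply ``stuck'' --- e.g.\ gates perpetually in a waiting state because no signal ever arrives, or internal wire components initialized to inert ``already finished'' values that neither propagate nor trigger a check --- is a fixed point distinct from all-error, which would make the network non-nilpotent regardless of $\phi$. You cannot fix this by penalizing waiting (legitimate computations must wait many steps for distant signals), and your ``path endpoints disagree'' check is not a local condition. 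You correctly flag this as ``the delicate technical point,'' but it is exactly the point that needs an idea, not just bookkeeping. The paper avoids the issue entirely by making the rule a \emph{static tableau check}: a configuration is interpreted as a complete (already finished) computation of the circuit, each node merely verifies that each of its components equals the gate output of (resp.\ copies) its predecessors' current values and that the output component is $1$, and switches to $\bot$ otherwise, leaving the state unchanged if all checks pass. With that rule the fixed points are exactly the all-$\bot$ configuration and the valid accepting computation tableaux, so non-nilpotency is equivalent to satisfiability with no timing or stuck-state analysis needed. To repair your proof, replace the travelling-signal simulation by this static consistency check (or else supply the missing characterization of all fixed points of your dynamic rule).
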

\begin{proof}
	First, by Lemma~\ref{lem:trace-length}, a freezing automata networks with $n$ nodes is nilpotent if and only if ${F^{\lambda(n)}}$ is constant where ${\lambda(n)\in O(n)}$ is the concrete computable bound from the lemma. The nilpotency problem is therefore clearly coNP.
	
	We now describe a reduction from problem SAT. Given a formula with $n$ variables seen as a Boolean circuit of maximum input/output degree $2$ (of size polynomial in $n$), we first construct ${G=(V,E)\in\mathcal{G}}$ such that the DAG $G'=(V',E')$ associated to the circuit is of size at most $p(|V|)$ where $p$ is the polynomial map of Lemma~\ref{lem:routing}. Then, using Lemma~\ref{lem:routing}, we have a map ${\mu:V'\rightarrow V}$ and a collection $\mathcal{C}$ of paths in $G$ that represent an embedding of $G'$ inside $G$. The lemma gives a bound $8$ on the number of paths visiting a given node ${v\in V}$. Then each node will hold $8$ Boolean values, each one corresponding either a node ${v'\in V'}$ of the Boolean circuit or an intermediate node of a path from the collection $\mathcal{C}$. The alphabet is then ${Q=\{0,1\}^8\cup \{\bot\}}$ where $\bot$ is a special error state. In any configuration ${c\in Q^V}$, a node can be either in error state $\bot$, or it holds $8$ Boolean components. We then construct the local rule at each node ${v\in V}$ that give a precise fixed role to each such component: it either represent a node $v'\in V'$ such that ${\mu(v')=v}$, or an intermediate node in one of the paths from ${\mathcal{C}}$, or is unused (because not all vertices of $V$ have $8$ paths from $\mathcal{C}$ visiting them). The local rule at ${v\in V}$ is as follows:
	\begin{itemize}
		\item if in state $\bot$ or if some neighbors is in state $\bot$, it stays in or changes to $\bot$;
		\item it then make the following checks and let the state unchanged if they all succeed or changes to $\bot$ if at least one test fails:
		\begin{enumerate}
			\item check for any component corresponding to a node ${v'\in V'}$ that it holds the Boolean value ${g(x,y)}$ where $g$ is the Boolean gate associated to $v'$ in the the circuit and $x$ and $y$ are the Boolean values of the components corresponding to the vertex just before $v$ in the two paths $\rho_{e_1}$ and $\rho_{e_2}$ in $\mathcal{C}$ that arrive at $\mu(v')=v$. In the case where $g$ is a 'not' gate, there is only one input and in the case where $v'$ is an input of the circuit, there is no input and nothing is checked;
			\item moreover, if the gate corresponding to $v'$ is the output gate of the circuit, check that its Boolean value is $1$;
			\item check for any component corresponding to an intermediate node in some path from $\mathcal{C}$ that the Boolean value it holds is the same as that of the component corresponding to the predecessor in the path.
		\end{enumerate}
	\end{itemize}
	We claim that $F$ is not nilpotent if and only if the formula represented by the Boolean circuit is satisfiable. Indeed the configuration everywhere equal to $\bot$ is always a fixed point. It should be clear that if the formula is satisfiable then one can build a configuration corresponding to a valid computation of the circuit on a valid input which is a fixed point not containing state $\bot$. In this case we have two distinct fixed points and the automata network is not nilpotent. Conversely, suppose the the automata network is not nilpotent. Then it must possess a fixed point $c$ distinct from the all $\bot$ one. Indeed, all configurations of ${X=F^t(Q^V)}$ are fixed points for $t$ large enough (by the freezing condition) and if $F^t$ is not a constant map then $X$ must contain at least two elements. Moreover, the fixed point $c$ do not contain state $\bot$, because otherwise it would contain a state from ${Q\setminus\{\bot\}}$ at some node which has a neighbor in state $\bot$, which would contradict the fact that it is a fixed point according to the local rule. Then $c$ is a configuration where all checks made by the local rules are correct: said differently, $c$ contains the simulation of a valid computation of the Boolean circuit that outputs $1$. Therefore the Boolean formula is satisfiable and the reduction follows.
\end{proof}

When giving an automata network as input, the description of the local functions depends on the underlying graph (and in particular the degree of each node). However, some local functions are completely isotropic and blind to the number of neighbors and therefore can be described once for all graphs. This is the case of local functions that only depends on the set of states present in the neighborhood. Indeed, given a map ${\rho : Q\times 2^Q\rightarrow Q}$ and any graph $G=(V,E)$, we define the automata network on $G$ with local functions ${F_v:Q^{N(v)}\rightarrow Q}$ such that 
${F_v(c) = \rho\bigl(c(v),\{c(v_1),\ldots,c(v_k)\}\bigr)}$ where ${N(v)=\{v_1,\ldots,v_k\}}$ is the neighborhood of $v$ which includes $v$. We then say that the automata network is \emph{set defined} by $\rho$. We will prove the next two hardness results with a fixed set defined rule, showing that there is a uniform and universally hard rule on graphs of polynomial treewidth for predecessor and asynchronous reachability problems. 
The proof below uses again Lemma~\ref{lem:routing} to embed arbitrary circuits like in theorems above, but the difference here is that the circuit embedding is written in the configuration and is not hardwired into the local rule. Moreover, the reduction also uses $L(1,1)$ graph coloring \cite{Calamoneri2006} to deal with communication routing in a set defined rule in a similar way as in a radio network.

\begin{theorem}
  There exists a map ${\rho : Q\times 2^Q\rightarrow Q}$ such that for any family ${\mathcal{G}}$ of constructible graphs of polynomial treewidth and bounded degree, the problems \emph{predecessor} and \emph{asynchronous reachability} are both NP-complete when restricted to $\mathcal{G}$ and automata networks set-defined by $\rho$.
  \label{teo:predeasyncNP}
\end{theorem}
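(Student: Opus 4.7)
The plan is to reduce SAT to both problems with a single fixed set-defined rule $\rho$. Given a SAT formula $\varphi$, we represent it as a Boolean circuit of maximum fan-in/fan-out $2$, then apply Lemma~\ref{lem:routing} to embed its DAG into an input graph $G\in\mathcal{G}$ large enough (of size polynomial in $|\varphi|$), as in the proof of Theorem~\ref{thm:nilpotency}. Since $G$ has bounded degree $\Delta$, a distance-$2$ coloring (that is, an $L(1,1)$-coloring \cite{Calamoneri2006}) with at most $\Delta^2+1$ colors can be computed efficiently by applying Proposition~\ref{prop:colNC} to $G^2$. In such a coloring every node sees at most one neighbor of each color, which is the essential ingredient for making a set-defined rule behave as if it were anisotropic.

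The alphabet $Q$ includes a color field holding the coloring (frozen, never updated), a role descriptor (input, output, gate type \textsc{and}/\textsc{or}/\textsc{not}, or wire), fields naming by color the expected predecessors of the node in the embedded circuit, a Boolean value field, a ``computed'' flag, and an error state $\bot$. From the set of neighbor states, a node can recover a well-defined partial map from colors to states, thereby reconstructing the anisotropic neighborhood information from the set-defined view. The rule $\rho$ is then designed so that each node checks the consistency of its role and its declared predecessors, propagates $\bot$ whenever anything is inconsistent or some neighbor is in $\bot$, and, when everything is consistent, simulates the expected local behavior: computing its Boolean value as soon as its inputs have their computed flag raised, then setting its own computed flag.

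For the predecessor problem we let $c_1$ be the configuration in which all frozen fields correctly encode the embedded circuit, the computed flag is raised everywhere, all Boolean values are coherent with an evaluation of the circuit, no node is in state $\bot$, and the output node holds value $1$. Choosing $t=\lambda(n)$ via Lemma~\ref{lem:trace-length}, any predecessor $c_0$ of $c_1$ must share the same frozen fields (they never change), must avoid $\bot$ (since it never disappears), and must be a valid initial encoding whose deterministic evolution reaches $c_1$; such $c_0$ corresponds exactly to a satisfying assignment of $\varphi$. For asynchronous reachability we take $c_0$ and $c_1$ sharing the frozen circuit fields, $c_1$ enforcing the output value $1$, and attach at each input gate a small gadget with two ``choice'' neighbors: an asynchronous schedule freely selects which choice fires first, fixing the Boolean value routed to that input, so reachability of $c_1$ becomes equivalent to satisfiability of $\varphi$.

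The main obstacle is designing $\rho$ once and for all: it must be set-defined yet simulate direction-aware circuit evaluation, it must lock the frozen fields so that no predecessor can cheat by choosing an inconsistent circuit encoding that happens to resolve into $c_1$, and it must generate $\bot$ from every local inconsistency so as to force the rigidity of the reduction. The $L(1,1)$-coloring supplies the addressing mechanism that overcomes the set-defined constraint in the spirit of radio networks, while strict error propagation via $\bot$ ensures the soundness of both reductions even against adversarial predecessor choice; containment in NP follows in both cases from Lemma~\ref{lem:trace-length}.
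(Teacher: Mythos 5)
Your overall architecture is the same as the paper's: reduce from SAT, embed the circuit DAG via Lemma~\ref{lem:routing}, use a distance-$2$ coloring of $G$ (Proposition~\ref{prop:colNC} applied to $G^2$) to give each node a private ``channel'' so that a set-defined rule can recover which neighbor says what, and use a pre-input/choice gadget so that the asynchronous schedule selects the truth assignment. However, there is a genuine gap in how you define the target configurations, and it breaks the predecessor reduction (and, less visibly, the reachability one). You describe $c_1$ as having ``all Boolean values coherent with an evaluation of the circuit \ldots and the output node holds value $1$.'' Such a configuration cannot be written down by the reduction without already knowing a satisfying assignment, so the map $\varphi\mapsto(\mathcal{A},c_1,t)$ is not computable in polynomial time; and if instead you leave the computed flags raised and the values arbitrary-but-fixed, then under any rule in which raised flags and consistent values are stable, $c_1$ is a fixed point and is its own predecessor for every $t$, so the instance is trivially a yes-instance regardless of $\varphi$. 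Your error state $\bot$ does not help here: error propagation constrains orbits going forward, but it neither hides the witness values from the target configuration nor prevents the target from being a fixed point.

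The paper's proof resolves exactly this point with a transitory certificate state: the state component takes values in $\{0,1,\texttt{ok},\texttt{off}\}$ with $\texttt{ok}\rightarrow\texttt{off}$, a Boolean value maps to $\texttt{ok}$ precisely when it equals the output of its gate on the values read from its input channels (and, for the output gate, equals $1$), and to $\texttt{off}$ otherwise. The target $c$ is ``all used components in $\texttt{ok}$'': it contains no Boolean values (so it is constructible without a satisfying assignment), it is not a fixed point (so it cannot be its own predecessor), and its only one-step predecessors are configurations whose Boolean values encode a correct evaluation outputting $1$. Note also that the paper takes $t=1$ for the predecessor problem; your choice $t=\lambda(n)$ is unnecessary and only makes the analysis of spurious predecessors harder. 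The same erasure idea is what lets the asynchronous target ($?\rightarrow\{0,1\}\rightarrow\texttt{ok}$, destination all-$\texttt{ok}$) be specified without committing to intermediate values. To repair your proof you need to replace the ``computed flag plus retained value'' design by such a value-erasing, non-idempotent certificate state in both reductions.
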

\begin{proof}
	These problems are clearly NP. For clarity of exposition we will construct a distinct map $\rho$ for each of the two problems. Then, by taking the disjoint union of the alphabets and merging the two rules with the additional condition that any node that sees both alphabets is left unchanged, we obtain a single map $\rho$ that is hard for both problems. Indeed, using the first alphabet only for predecessor problem inputs, we have the guarantee that the only possible pre-images must only use the first alphabet, hence the hardness follows for the combined rule. The same is true for asynchronous reachability using the second alphabet.
	
	Let's now describe ${\rho_1 : Q_1\times 2^{Q_1}\rightarrow Q_1}$ that set defines automata networks which have a NP-complete predecessor problem when restricted to $\mathcal{G}$. We describe it while showing the polynomial time reduction from SAT to the predecessor problem. Given a formula with $n$ variables seen as a Boolean circuit of maximum input/output degree $2$ (of size polynomial in $n$), we first construct ${G=(V,E)\in\mathcal{G}}$ such that the DAG $G'=(V',E')$ associated to the circuit is of size at most $p(|V|)$ where $p$ is the polynomial map of Lemma~\ref{lem:routing}. Then, using Lemma~\ref{lem:routing}, we have a map ${\mu:V'\rightarrow V}$ and a collection $\mathcal{C}$ of paths in $G$ that represent an embedding of $G'$ inside $G$. The lemma gives a bound $8$ on the number of paths visiting a given node ${v\in V}$. 
	Let's compute a vertex coloring ${\chi:V\rightarrow\{1,\ldots,k\}}$ of the square of $G$ with ${k\leq deg(G)^2+1}$ colors, \textit{i.e.} a vertex coloring of $G$ such that no pair of neighbors of a given node has the same color (this can be done in polynomial time by a greedy algorithm). To implement the routing of information along paths of $\mathcal{C}$ and the circuit simulation by $\rho_1$, the alphabet $Q_1$ holds $8k$ state components, and we will use configurations where each node $v\in V$ uses only components ${8\chi(v)}$ to ${8\chi(v)+7}$. These components can be seen as communication channels. Indeed, in such configurations, a node can distinguish the information going through up to $8$ distinct paths coming from each neighbor individually just by looking at the set of states present in the neighborhood (because no pair of neighbors can use the same channel). Apart from the routing of information through paths, the rule $\rho_1$ implements each gate $v'\in V'$ of the Boolean circuits inside node $\mu(v')$ of $G$. We think of paths from $\mathcal{C}$ as being part of the circuit with nodes that implement the identity map. For that purpose each state component in a node is associated to a descriptor that gives the type of gate to implement (input, identity, not, or, and, output) and the component numbers corresponding to input(s) of the gate (gates of type 'input' have no input).
	\newcommand\sok{\texttt{ok}}
	\newcommand\soff{\texttt{off}}
	Formally, a state component is given by ${S_1=\{0,1,\sok,\soff\}}$ where $0$ and $1$ are Boolean values, $\soff$ means that the component is unused and $\sok$ is a special transitory state used to check correctness of computations (see below). A descriptor component is given by ${D_1}$, a finite set used to code any possible combination of gate type and input component numbers (${|D_1|=6(8k)^2}$ is enough). Then the state set of $\rho_1$ is ${Q_1=(S_1\times D_1)^{8k}}$. In a given configuration, we say that a given node ${v\in V}$ reads value ${x\in\{0,1\}}$ on channel $i$ if there is a unique state in the neighborhood with a state component $i$ which is not $\soff$, and if this state component contains value $x$. In any other case, the value read on channel $i$ is undefined. The rule $\rho_1$ does the following:
	\begin{itemize}
		\item the $D_1$ component are never changed;
		\item state components in $\soff$ stay unchanged;
		\item any state component in $\sok$ becomes $\soff$;
		\item any state component in state ${x\in\{0,1\}}$ checks that $x$ is the correct output value of its gate type applied to the values read on the input channels given by its corresponding descriptor (in particular these input values must be defined). If it is the case, it becomes ${\sok}$, otherwise ${\soff}$. The only exception to this rule is the case of the gate of type ``output'' where we only change state to ${\sok}$ if $x=1$ and the computation check is correct, and change to $\soff$ in any other case.
	\end{itemize}
	We then build configuration $c\in Q_1^V$ for the predecessor problem as follows:
	\begin{itemize}
		\item input component numbers and gate types in $D$ components are set
		according to the Boolean circuit and the path collection
		${\mathcal{C}}$;
		\item all unused state components are marked as $\soff$;
		\item all used state components are marked $\sok$.
	\end{itemize}
	We claim that $c$ has a predecessor in one step (\textit{i.e.} ${F_{\rho_1}(y)=c}$ for some ${y\in Q_1^V}$) if and only if the SAT formula represented by the Boolean circuit is satisfiable. Indeed, the only possible predecessor configurations of $c$ are such that all used state component hold a Boolean value equal to the output value of the gate they code applied to their corresponding input Boolean values, and that the output gate holds value $1$. 
	
	We now describe ${\rho_2 : Q_2\times 2^{Q_2}\rightarrow Q_2}$ that set defines automata networks which have a NP-complete asynchronous reachability problem when restricted to $\mathcal{G}$. The construction is almost identical to $\rho_1$ and the reduction is again from SAT problems, but with the following modifications:
	\begin{itemize}
		\item the state component is now ${S_2=\{?,0,1,\sok,\soff\}}$ where the new state $?$ represents a pre-update standby state; in each state component, the possible state sequences are subsequences of either ${?\rightarrow \{0,1\}\rightarrow \sok \rightarrow \soff}$ or ${?\rightarrow \{0,1\}\rightarrow \soff}$ ;
		\item to each input gate of the Boolean circuit is attached a pre-input gate that serve as non-deterministic choice for input gates using asynchronous updates; the set $D_2$ is a modification of $D_1$ taking into account this new type of gates; the alphabet is then ${Q_2=(S_2\times D_2)^{8k}}$;
		\item the behavior of each state component depending on its type is as follows:
		\begin{itemize}
			\item pre-input components become $\sok$ if previously in state $?$ and $\soff$ in any other case;
			\item input components in state $?$ become either $0$ or $1$ depending on whether there corresponding pre-input component is in state $?$ or not;
			\item any other state component in state $?$ become ${x\in\{0,1\}}$ the output value of its gate type applied to the values read on the input channels given by its corresponding descriptor (in particular these input values must be defined). If in a state from ${\{0,1\}}$, it becomes ${\sok}$, and in any other case it becomes ${\soff}$. The only exception to this rule is the case of gates of type ``output'' where we only change state to ${\sok}$ if the current value is $1$, and change to $\soff$ if the current value is $0$.
		\end{itemize}
		When then define source configuration $c_0$ and destination configuration $c_1$ for the asynchronous reachability problem as follows. They both use the same circuit embedding like in $c$ above but with a pre-input attached to each input. In $c_0$ all unused components are in state $\soff$ and all used state components (including pre-inputs) are in state $?$. In $c_1$ all unused components are in state $\soff$ and all used state components are in state $\sok$. It should be clear that $c_1$ can be reached from $c_0$ if the formula associated to the Boolean circuit is satisfiable since either $0$ or $1$ can be produced at each input depending on whether the associated pre-input is update before the input update or not. Suppose now that $c_1$ can be reached from $c_0$ with some asynchronous update. First, all used state components except pre-inputs must follow either the sequence ${?\rightarrow 0 \rightarrow \sok}$ or ${?\rightarrow 1\rightarrow \sok}$. Therefore we can associate to each such component a unique Boolean value ($0$ or $1$ respectively) and the rule $\rho_2$ ensures that the Boolean value of each such component is the output value of its corresponding circuit gate applied on the Boolean value of its corresponding inputs. Moreover the output gate must have Boolean value $1$ so we deduce that the simulated circuit outputs $1$ on the particular choice of Boolean values of inputs. The reduction from SAT follows.
	\end{itemize}
	\end{proof}

In the remaining of this section, we focus on the prediction problem for families of graphs with polynomial treewidth. In particular, we are interested in deriving an analogous of Theorem \ref{teo:predeasyncNP} for prediction problem. Nevertheless, as a log-space or a NC reduction of some $\textbf{P}$-complete problem is required, most of the latter results that worked for Theorem \ref{teo:predeasyncNP} are not necessarily valid in this context as we only know that there exist polynomial time algorithms that compute certain needed structures. In order to face this task, our approach is  based in  slightly modifying the input of our prediction problem and then show that we can efficiently compute paths in a polynomial treewidth graph $G$.  The latter will allow us to show that we have an analogous of subgraph routing lemma (Lemma \ref{lem:routing}).   In particular, as it is not clear if the perfect bramble structures used in order to obtain Lemma \ref{lem:routing} are calculable in $\textbf{NC}$ or in log-space, we need to modify the problem in order to show that there exists a log-space reduction or an \textbf{NC} reduction of circuit value problem (\textsc{CVP})  in this particular variation of prediction, and thus that it is $\textbf{P}$-complete. More precisely, we add a perfect bramble of polynomial size to the input of \textsf{Prediction problem}. We call this modified version of prediction \textsf{Routed Prediction problem.} 

\begin{problem}[Routed prediction problem]\ 
	\begin{description}
		\item[Parameters: ] alphabet $Q$, family of graphs $\mathcal{G}$ of max degree $\Delta$
		\item[Input: ]\ 
		\begin{enumerate}
			\item a deterministic freezing automata network
			$\mathcal{A}=(G,F)$ on alphabet $Q$, with set of nodes
			$V$ with ${n=|V|}$ and $G\in\mathcal{G}$;
			\item an initial configuration $c\in Q^V$
			\item a node $v\in V$ and a ${(\{v\},Q,t)}$-specification $\mathcal{S}_v$ of length ${t\in\N}$
			\item A perfect bramble  $\mathcal{B} = (B_1,\hdots, B_p)$ in with $p = n^{\mathcal{O}(1)}$ in  $G$
		\end{enumerate}
		\item[Question: ] does the orbit of $c$ restricted to $v$ satisfies specification $\mathcal{S}_v$?
	\end{description}
\end{problem}

Now, having this latter problem in mind,  we slightly modify the definition of a constructible familly of graphs $\mathcal{G}$ of polynomial treewidth introduced at the begining of this section: we define a  routed collection of graphs of polynomial treewidth to the set $\mathcal{G} = \{(G_n,\mathcal{B}_n)\}_{n\in N}$ such that $G_n$ is an undirected connected graph of order $n$ and treewidth $\text{tw}(G_n) \geq p(n)$ and $\mathcal{B}_n$ is a perfect bramble such that $|\mathcal{B}_n| \geq p'(n)$ where $p$ and $p'$ are polynomials. We say a that  a routed collection of graphs of polynomial treewidth $\mathcal{G}$  is  log-constructible if there is a log-space algorithm that given $n$ produce the tuple $(G_n , \mathcal{B}_n) \in \mathcal{G}$.  
As we will be working with a log-constructible collection of routed graphs, we would like to say that we could have the result of Lemma \ref{lem:routing} in order to show the main result of this section. Nevertheless, in order to do that,  we need  to have a log-space or a NC algorithm computing the paths that we will be using for the proof of the main result.  More precisely, we need to compute the function $\mu$ and the collection of paths $\mathcal{C}$. Fortunately, in \cite[Theorem 5.3]{Reingold2008} it is shown that there exist a log-space algorithm that accomplish this task.
Finally, as in the proof of Theorem \ref{teo:predeasyncNP}, we need a proper coloring of the square graph $G^2$ in order to broadcast information through the paths in the collection $\mathcal{C}$ without encountering problems in the nodes that are in different paths at the same time. Fortunatly, we can do this in $\textbf{NC}$ as it is stated in \cite[Theorem 3]{goldberg1987parallel}.
We are now in condition of showing our main result concerning routed prediction problem:

\begin{theorem}
  \label{theo:pcompleteprediction}
  There exists a map ${\rho : Q\times 2^Q\rightarrow Q}$ such that  \textsf{Routed Prediction problem} is \textbf{P}-complete restricted to any family $\mathcal{G}$ of log-constructible routed collection of graphs of polynomial treewidth.
\end{theorem}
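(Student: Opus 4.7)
The proof has two parts. Membership in $\textbf{P}$ is immediate: by Lemma~\ref{lem:trace-length}, only a polynomial number of steps of a deterministic freezing dynamics must be simulated, which can be done node by node in polynomial time, and the specification $\mathcal{S}_v$ is then checked against the trace at $v$. For $\textbf{P}$-hardness, I would give an $\textbf{NC}$ reduction from the Circuit Value Problem $\textsc{CVP}$ that mirrors the construction of Theorem~\ref{teo:predeasyncNP}, with the critical twist that the perfect bramble is now part of the input, which removes the only intrinsically sequential ingredient in the previous proof.

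Given a $\textsc{CVP}$ instance consisting of a Boolean circuit $\mathcal{C}$ with $N$ gates of fan-in and fan-out at most $2$ and an input assignment, I would first invoke the log-space constructor of $\mathcal{G}$ to retrieve $(G_m,\mathcal{B}_m)\in\mathcal{G}$ for the smallest $m$ satisfying $|\mathcal{B}_m|\geq N$, which is polynomial in $N$ by the polynomial-treewidth hypothesis on $\mathcal{G}$. Then I would carry out the argument of Lemma~\ref{lem:routing} in parallel: for each gate $v'$ of $\mathcal{C}$ select $\mu(v')\in B_{v'}$, and for each wire $(v'_1,v'_2)$ pick an intersection point $w\in B_{v'_1}\cap B_{v'_2}$ (which exists by the perfect-bramble axioms) and connect $\mu(v'_1)$ to $\mu(v'_2)$ by concatenating a path from $\mu(v'_1)$ to $w$ inside $B_{v'_1}$ with one from $w$ to $\mu(v'_2)$ inside $B_{v'_2}$. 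These paths are computed in log-space via Proposition~\ref{prop:pathsLOG} applied to the relevant induced subgraphs, and the bramble properties ensure that every vertex of $G_m$ is traversed by at most $8$ such paths. In parallel I would compute a proper coloring of $G_m^2$ with $O(\Delta^2)$ colors using Proposition~\ref{prop:colNC}; this $L(1,1)$-coloring of $G_m$ assigns to each node a communication channel disjoint from those of all its neighbors, so that a set-defined rule can unambiguously read information arriving on each individual incoming path.

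The rule $\rho$ is then the gate-evaluation rule $\rho_1$ of Theorem~\ref{teo:predeasyncNP}: its alphabet $Q$ consists of constantly many state components of the form $(s,d)$ where $s$ is either a Boolean value or one of the special symbols $\sok,\soff$, and $d$ is a fixed-size descriptor encoding a gate type (\emph{input}, \emph{identity}, \emph{and}, \emph{or}, \emph{not}, \emph{output}) together with the channel numbers from which its Boolean inputs should be read. An active component transitions to $\sok$ if its stored Boolean value agrees with the output of its declared gate applied to the values read on its declared channels, and to $\soff$ otherwise. The initial configuration $c$ is assembled node by node from the embedding $(\mu,\mathcal{C})$, the coloring, and the circuit input assignment, which can all be done in $\textbf{NC}$. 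The queried node is $v=\mu(v'_{\text{out}})$ for the circuit's output gate, and $\mathcal{S}_v$ asks that the corresponding output component eventually takes value $\sok$. A straightforward induction along a topological order of $\mathcal{C}$ then shows that this specification is satisfied if and only if $\mathcal{C}$ evaluates to $1$ on the given input.

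The principal obstacle is that every ingredient of the reduction must remain in $\textbf{NC}$ (or log-space), whereas the polynomial-time perfect-bramble algorithm of \cite{KreutzerT10} used inside Lemma~\ref{lem:routing} is not known to parallelize; this is precisely why the bramble is supplied as part of the input of \textsf{Routed Prediction}. The other potentially expensive steps, namely path-finding inside each $B_{v'}$ and $L(1,1)$-coloring, are handled respectively by Proposition~\ref{prop:pathsLOG} and Proposition~\ref{prop:colNC}. A further subtlety is that $\rho$ must be fixed independently of the input: this forces every circuit-specific piece of information to be stored as descriptors in the initial configuration rather than hard-wired in the rule, and forces the number of state components per node and the alphabet $Q$ to be bounded uniformly, which follows from the bounded-degree hypothesis on $\mathcal{G}$ combined with the constant bound on the multiplicity of paths per vertex.
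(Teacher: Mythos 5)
Your overall architecture matches the paper's: membership in $\textbf{P}$ via Lemma~\ref{lem:trace-length}, then an $\textbf{NC}$/log-space reduction from a circuit value problem that exploits the bramble being part of the input (instead of the polynomial-time bramble algorithm of \cite{KreutzerT10}), Proposition~\ref{prop:pathsLOG} for the routing paths, Proposition~\ref{prop:colNC} for the coloring of $G^2$, and a fixed set-defined rule reading gate descriptors stored in the configuration. The gap is in the choice of rule. You reuse $\rho_1$, the \emph{verification} rule from the predecessor reduction: a component holding a Boolean value checks in one step whether that value agrees with its gate applied to the values read on its input channels, and becomes \texttt{ok} or \texttt{off} accordingly. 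That semantics works for the predecessor problem because a full, consistent gate evaluation is existentially quantified in the preimage. For prediction the initial configuration only carries the circuit's \emph{input} assignment (you cannot precompute the interior gate values in $\textbf{NC}$ without solving CVP itself), so the dynamics must \emph{compute} the circuit by forward propagation; $\rho_1$ never does this, and you do not say what Boolean values the non-input components hold initially, so your ``induction along a topological order'' has nothing to propagate. The paper resolves this by adding a $\textbf{wait}$ state: a gate component stays in $\textbf{wait}$ until enough of its inputs have resolved, then commits to $0$ or $1$, and the order $\textbf{wait}\leq 0\leq 1$ makes the evaluation monotone and freezing even when the two input signals arrive at different times (an AND fires $0$ upon seeing one input at $0$, an OR fires $1$ upon seeing one input at $1$, and the later arrival cannot change the committed value).

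This forward-computation semantics is also why the paper reduces from the \emph{alternating monotone} $2$ fan-in $2$ fan-out circuit value problem (AM2CVP) rather than general CVP: your gate list includes \emph{not}, but a NOT gate is incompatible with the monotone order on $\{\textbf{wait},0,1\}$ that underlies both the freezing property and the argument that asynchronous signal arrival is harmless. Restricting to monotone circuits (still $\textbf{P}$-hard under $\textbf{NC}$ reductions, by \cite{greenlaw1995limits}) removes this obstruction. With these two corrections --- a waiting-then-committing rule in place of the one-step verification rule, and AM2CVP as the source problem --- your reduction becomes the paper's.
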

\begin{proof}
	We start by observing that prediction problem is in $\textbf{P}$.
	We also recall that in order to show the $\textbf{P}$-hardness of prediction problem, it suffices to show that there exist a \textbf{NC} reduction for the alternating monotone $2$ fan-in $2$ fan-out circuit value problem ($\textsc{AM2CVP}$), more precisely $ \textsc{AM2CVP}\leq^{\textbf{NC}^2}_{m} \textsc{PRED}_{\mathcal{G}}$  (see \cite{greenlaw1995limits} Theorem 4.2.2 and Lemma 6.1.2). Let $n,l \in \mathbb{N}$,  $C:\{0,1\}^n \to \{0,1\}^l$ a monotone alternating $2$ fan in $2$ fan out circuit, $x \in \{0,1\}^n$ and $o \in \{0,\hdots, l-1\}$ a fixed output of $C$. We call $C' = (V',E')$ to the underlying DAG defining $C$ and we fix $G \in \mathcal{G}$ where $\mathcal{G}$ is a log-constructible family of graphs with polynomial treewidth. We note that, by definition we can compute $G$ and a perfect bramble of size $p = n^{\mathcal{O}(1)}$ in log-space  and thus we can do the latter computations in  $\textbf{NC}^2$.  Now, we use $\mathcal{B}$ and Proposition \ref{prop:pathsLOG} in order to compute a mapping $\mu: V' \to V$ and a collection of paths $\mathcal{C}$ as in Lemma \ref{lem:routing}.   As we did in Theorem \ref{teo:predeasyncNP}, we use Proposition \ref{prop:colNC} in order to compute a $k$-proper coloring $\chi: V \to  \{1,\hdots, k\}$  of $G^2$ in $\textbf{NC}^2$ with $k = \Delta^2 +1 $ for $\Delta \in \mathbb{N}$ such that $\Delta(G) = \Delta$. From here we construct $\rho$ analogously  as we did for $\rho_1$  and $\rho_2$ in the proof of Theorem \ref{teo:predeasyncNP} but observing that now we have only $5$ type of gates as the circuit is monotone. We also consider  state component $S = \{0,1,\textbf{wait}, \textbf{off}\}$. Remember that the descriptor component assures that there won't be overlappings of the channels during broadcasting. We  map $x$  into a configuration $y  \in Q = \{S \times D\}^{8k}$ in the following way: 
	\begin{itemize}
		\item The $D$ component is assigned according to the structure of $C'$.
		\item For every input we assign a boolean value given by $x.$
		\item For every unused node we assign the state $\textbf{off}$.
		\item For every other node we assign the state $\textbf{wait}$.
	\end{itemize}
	The rule $\rho$ is defined in the following way:
	\begin{itemize}
		\item Every node in state $\textbf{off}$, $0$ or $1$ is fixed and does not change its state.
		
		\item Every node in state $\textbf{wait}$ reads the information of its neighbors and do the following depending on its type of gate:
		\begin{itemize}
			\item identity will take the value of its input
			\item AND will read its inputs: if both inputs are in $1$ it will change to $1$ and it will change to $0$ if it reads one neighbor in $0$. In any other case it will remain in $\textbf{wait}$
			\item OR  will read its inputs: if both inputs are in $0$ it will change to $0$ and it will change to $1$ if it reads one neighbor in $1$. In any other case it will remain in $\textbf{wait}$
		\end{itemize}
	\end{itemize}
	
	In order to show the desired result,  it will suffices to show the following simulation property: there exists $t \in \mathbb{N}$, $ t = n^{\mathcal{O}(1)}$ such that for every output $o \in V'$ we have $C(x)_o = (F^t(y)_{\mu(o)})|_S$ where $y\in Q^V$ is the configuration computed from $x$ as explained above.  In fact, if we have the latter property,  for some fixed output $o$, we define  $v = \mu(o)$ and $\mathcal{S}_v$ be a $(\{v\},Q,t)$-specification such that $\mathcal{S}_v = \{z \in \{0,1\}: z \not = y_v \text{ and } (F_\rho(y)^t|_v)|_S = z \} $ and then we can answer if the orbit of $y$  in time $t$ given by $F^t_{\rho}(y)$ satisfies $\mathcal{S}_v$ if and only if  we can answer if $C(x)_{o} = 1$ (and thus   $\textsc{AM2CVP}\leq^{\textbf{NC}^2}_{m} \textsc{PRED}$). We now show that the latter simulation property holds. In order to do that, we inductively check, that eventually, the orbit of $y$ will evaluate every layer of the circuit. We start by the input. Note that in one time step all the information is broadcasted through the different channels and through the paths given by $\mathcal{C}$. In a maximum of $L = n^{\mathcal{O}(1)}$  time steps (given by the longest path of $C'$) the last signal will arrive to a gate in the first layer. Note that, with the gates described above, signals arriving at different times do not change its output value as gates have a monotone behaviors on states with order ${\mathbf{wait}\leq 0\leq 1}$. Iteratively, we have maximum arriving times for signals of $L$ time steps for each layer and then, defining $t = L \times \text{deph}(C) = n^{\mathcal{O}(1)}$ and observing that output nodes will  will remain  constant once they have done a computation (when they change to a boolean value), we get the desire result. Therefore, $\textsc{AM2CVP}\leq^{\textbf{NC}^2}_{m} \textsc{PRED}$ holds and then, $\textsc{PRED}_{\mathcal{G}}$ is $\textbf{P}$-complete.
	
\end{proof}
\section{Discussion}

In this paper, we established the key role of treewidth and maximum degree in the computational complexity of freezing automata networks. We believe that our results can be extended in several ways.

First, our algorithm for the general model checking problem is not as efficient as known algorithms for specific sub-problems \cite{Buss_1987} and it would be interesting to establish hardness results in the NC hierarchy to make this gap more precise. In the same vein, our algorithm doesn't yield fixed parameter tractability results for any of the parameters (treewidth, degree, alphabet), and we wonder whether our hardness results in the framework of parameterized complexity \cite{Downey_2013} could be improved. We could also consider intermediate treewidth classes (non-constant but sub-polynomial). Concerning these complexity questions, we think that considering other (more restrictive) parameters like pathwidth could definitely help to obtain better bounds.

Besides, one might wonder whether the set of dynamical properties that are efficiently decidable on graphs of bounded degree and treewidth could be in fact much larger than what gives our model checking formalism. This question remains largely open, but we can already add ingredients in our formalism (for instance, a relational predicate representing the input graph structure). We however conjecture that there are NP-hard properties for freezing automata network on trees of bounded degree that can be expressed in the following language: first order quantification on configurations together with a reachability predicate (configuration $y$ can be reached from $x$ in the system).

Finally, we think that we can push our algorithm further and partly release the constraint on maximum degree (for instance allowing a bounded number of nodes of unbounded degree). This can however not work in the general model checking setting as shown in Remark~\ref{rem:nilpotencyontrees}. 

\bibliography{paper}

\end{document}